\documentclass[superscriptaddress,amsmath,amssymb,aps,pra,letterpaper,tightenlines,notitlepages,12pt]{revtex4-2}

\pdfoutput=1

\usepackage{amsmath}
\usepackage{amsfonts}
\usepackage{amssymb}
\usepackage{mathtools}
\usepackage{graphicx}
\usepackage{hyperref}
\usepackage{xfrac}
\usepackage{physics}
\usepackage{xcolor}
\usepackage{babel}[english]
\usepackage{multirow}

\providecommand{\U}[1]{\protect\rule{.1in}{.1in}}

\hypersetup{colorlinks=true,citecolor=blue,linkcolor=blue,filecolor=blue,urlcolor=blue,breaklinks=true}
\newtheorem{theorem}{Theorem}

\newtheorem{definition}[theorem]{Definition}

\newtheorem{lemma}[theorem]{Lemma}

\newtheorem{proposition}[theorem]{Proposition}
\newtheorem{remark}[theorem]{Remark}

\newenvironment{proof}[1][Proof]{\noindent\textbf{#1.} }{\ \rule{0.5em}{0.5em}}

\allowdisplaybreaks

\begin{document}

\title{Improved sample complexity bound for\\sample-based Lindbladian simulation}

\author{Siheon Park}
\thanks{These authors contributed equally to this work.}
\affiliation{Department of Physics and Astronomy, Seoul National University, Seoul 08826, Republic of Korea}

\author{Youngjin Seo}
\thanks{These authors contributed equally to this work.}
\affiliation{School of Computational Sciences, Korea Institute for Advanced Study, Seoul 02455, Republic of Korea}

\author{Byeongseon Go}
\affiliation{Department of Physics and Astronomy, Seoul National University, Seoul 08826, Republic of Korea}

\author{Dhrumil~Patel}
\affiliation{Department of Computer Science, Virginia Tech, Alexandria, Virginia 22305, USA}
\affiliation{Department of Computer Science, Cornell University, Ithaca, New York 14850, USA}

\author{Mark~M.~Wilde}
\email{wilde@cornell.edu}
\affiliation{School of Electrical and Computer Engineering, Cornell University, Ithaca, New York 14850, USA}

\author{Hyukjoon Kwon}
\email{hjkwon@kias.re.kr}
\affiliation{School of Computational Sciences, Korea Institute for Advanced Study, Seoul 02455, Republic of Korea}

\begin{abstract}
We establish improved sample-complexity bounds for sample-based Lindbladian simulation based on the Wave Matrix Lindbladization (WML) algorithm. For a jump operator $L$ with dimension $d$, we derive an explicit non-asymptotic sample complexity bound $n_d^*(t,\varepsilon) \le \left( \frac{2d+3}{8} \right) \left\|L\right\|_\infty^2 \left( \frac{t^2}{\varepsilon} \right)$, holding for simulation time $t$ and error $\varepsilon$. This refines the dimension dependence of the best previously known bound, $O(d^2 t^2/\varepsilon)$, from [Go \textit{et al.}, Quantum~Sci.~Tech.~\textbf{10}, 045058 (2025)]. Remarkably, we show that this dimensional overhead can be entirely avoided when $\| L\|_\infty^2 = O(1/d)$, a condition satisfied with high probability for random Lindblad operators, yielding a typical-case sample complexity of $O(t^2/\varepsilon)$. On the other hand, in the worst case, we show that WML necessarily requires $\Omega(dt^2/\varepsilon)$ samples by constructing an explicit example with a rank-one Lindblad operator. Our results reveal a sharp dichotomy between typical and adversarial sample complexities in Lindbladian simulation, thereby strengthening the theoretical foundations of sample-based quantum algorithms.
\end{abstract}

\maketitle

\tableofcontents

\section{Introduction}

The accurate simulation of quantum systems is a cornerstone of quantum information science \cite{breuer2002TheoryOpenQuantum, nielsen2010QuantumComputationQuantum, wiseman2009QuantumMeasurementControl, weiss2012QuantumDissipativeSystems, feynman1982SimulatingPhysicsComputers}, with broad implications ranging from quantum chemistry \cite{wecker2014GatecountEstimatesPerforming, babbush2015ChemicalBasisTrotterSuzuki, reiher2017ElucidatingReactionMechanisms, poulin2014TrotterStepSize} to condensed-matter physics~\cite{raeisi2012QuantumcircuitDesignEfficient, daley2014QuantumTrajectoriesOpen}. Hamiltonian simulation~\cite{low2017OptimalHamiltonianSimulation, berry2015SimulatingHamiltonianDynamics, low2019HamiltonianSimulationQubitization, childs2018FirstQuantumSimulation} has been used to model the unitary dynamics of closed quantum systems, enabling the study of phenomena such as quantum phase transitions and entanglement growth. However, many realistic quantum systems are inherently open because they interact with external environments, leading to dissipation and decoherence. The dynamics of such systems can be described by the Gorini--Kossakowski--Sudarshan--Lindblad (GKSL) master equation, commonly referred to as Lindbladian dynamics, which provides a general framework for Markovian non-unitary time evolution arising from environmental coupling~\cite{breuer2002TheoryOpenQuantum, gorini1976CompletelyPositiveDynamical, lindblad1976GeneratorsQuantumDynamical}.

Despite substantial progress in quantum simulation, the efficient simulation of Lindbladian dynamics remains comparatively underexplored relative to Hamiltonian simulation. While this is partly because quantum computation has conventionally been formulated in terms of unitary operations, the simulation of open quantum dynamics has increasingly been studied in the contexts of dissipative quantum computing~\cite{poyatos1996QuantumReservoirEngineering, cubitt2015StabilityLocalQuantuma, zhan2026RapidQuantumGround}, quantum error mitigation~\cite{vandenberg2023ProbabilisticErrorCancellation}, and Gibbs sampling~\cite{jin2024QuantumSimulationPartial, lin2025DissipativePreparationManybody}, to name a few. Several approaches have been proposed to simulate open quantum systems, including higher-order Trotterization and product-formula methods~\cite{li2023SimulatingMarkovianOpen, burdine2024EfficientSimulationOpen,childs2023EfficientSimulationSparse}, \emph{qDrift}-like randomization~\cite{kato2024ExponentiallyAccurateOpen, Peng2025,Chen2025randomizedmethod}, quantum trajectory methods known as unraveling~\cite{dalibard1992WavefunctionApproachDissipative, dum1992MonteCarloSimulation, carmichael1993OpenSystemsApproach, cao2025DynamicallyOptimalUnraveling}, embeddings of Lindbladian dynamics into effective Hamiltonian simulations~\cite{cleve2019EfficientQuantumAlgorithms, li2023SimulatingMarkovianOpen, ding2024SimulatingOpenQuantum, low2025OptimalQuantumSimulation}, and vectorized linear-combination techniques for superoperators~\cite{kamakari2022DigitalQuantumSimulation, schlimgen2022QuantumSimulationLindblad}. These methods exploit unitary primitives to emulate dissipative evolution, thereby narrowing the gap between closed- and open-system simulation and enabling the study of quantum dynamics in realistic settings.

Wave Matrix Lindbladization (WML) was recently proposed as a sample-based Lindbladian simulation algorithm \cite{patel2023WaveMatrixLindbladization, patel2023WaveMatrixLindbladizationa}, and it offers potential advantages analogous to those of sample-based Hamiltonian simulation~\cite{lloyd2014QuantumPrincipalComponent,kimmel2017HamiltonianSimulationOptimal, wei2024SimulatingNoncompletelyPositive, go2025SamplebasedHamiltonianLindbladian}. In WML, one has access to multiple copies of a program state that encodes the target Lindbladian dynamics. The goal is to implement the Lindbladian evolution $e^{ \mathcal{L} t}$ approximately for a desired evolution time $t \geq 0$, where the Lindbladian operator $\mathcal{L}$ is the generator of the Lindbladian dynamics. While Hamiltonian terms and multiple dissipative terms could be included in more general Lindbladian dynamics, in this work, we focus instead on a single Lindblad operator $L$, which serves as the foundation for understanding the algorithms for general Lindbladians presented in Ref.~\cite{patel2023WaveMatrixLindbladizationa}.

The sample complexity of sample-based quantum simulation is defined as the number of copies of a program state, $\pi$, required to realize the target quantum channel $e^{ \mathcal{L} t}$ up to a desired error tolerance $\varepsilon \in [0, 1]$. The state-of-the-art sample complexity upper bound of WML was shown as $O(d^2 t^2/\varepsilon)$ for $L$ acting on a $d$-dimensional Hilbert space and normalized as $\left\| L\right\|_2 = 1$~\cite{go2025SamplebasedHamiltonianLindbladian}. Although $d$ remains small for most physically relevant systems governed by local Lindbladians, a comprehensive analysis for arbitrary, not necessarily local, Lindbladians is crucial for a more complete theoretical understanding of simulating open quantum dynamics~\cite{kliesch2011DissipativeQuantumChurchTuring, cubitt2015StabilityLocalQuantuma, cleve2019EfficientQuantumAlgorithms, xu2019ExtremeDecoherenceQuantum, lin2025DissipativePreparationManybody}. As the algorithm-independent sample complexity lower bound $\Omega(t^2/\varepsilon)$ has no explicit dependence on $d$~\cite{go2025SamplebasedHamiltonianLindbladian}, the resulting $d^2$-gap between the WML upper bound and the general lower bound leaves open the possibility of sample-based quantum algorithms, or even hybrid quantum--classical algorithms, that outperform WML. This possibility is particularly relevant in light of the fact that quantum state tomography requires $\Omega(d^2/\varepsilon^2)$ samples to achieve trace distance $\varepsilon$~\cite{haah2016SampleoptimalTomographyQuantum, haah2017SampleoptimalTomographyQuantum, yuen2023ImprovedSampleComplexity, patel2023WaveMatrixLindbladizationa, anshu2023SurveyComplexityLearning}.

\bigskip

\noindent
\textbf{Our contribution.} In this work, we establish a tighter upper bound on the sample complexity of WML for Lindbladian simulation, namely
\begin{equation}
    O(d \left\|L\right\|_\infty^2 t^2/\varepsilon),
\end{equation} 
showing that its dimension dependence is at most linear, given that $\left\|L\right\|_\infty^2 \leq \left\| L\right\|_2^2 = 1$. This result narrows the gap with the previously known sample-complexity lower bound~\cite{go2025SamplebasedHamiltonianLindbladian} while preserving the $t^2/\varepsilon$ scaling. Crucially, reducing the dimension dependence from quadratic to linear restores a provably strict separation between the sample complexity of Lindbladian simulation and that of program-state tomography. 
Furthermore, using random matrix theory, we show that the dimensional overhead can even disappear for typical random instances. More precisely, Lindblad operators randomly drawn from the Frobenius-normalized Ginibre ensemble satisfy $\left\|L\right\|_\infty^2=O(1/d)$ with high probability, in which case the WML sample-complexity upper bound becomes dimension-independent in the large-$d$ limit. In contrast, we construct an explicit worst-case example for which the sample complexity of WML necessarily scales linearly with $d$, indicating an irreducible dimensional gap between typical and worst-case sample complexities. The resulting upper and lower bounds are summarized in Table~\ref{table:results}.

\begin{table}[htbp]
\centering
\renewcommand{\arraystretch}{1.4} 
\begin{tabular}{c||c|c}
    & Upper bound & Lower bound \\
    \hline\hline
    \multirow{2}{*}{General case}
    & \rule{0pt}{3.5ex} $n_d^\ast(t,\varepsilon) =  O\!\left(\frac{d^2t^2}{\varepsilon}\right)$ \cite[Theorem 9]{go2025SamplebasedHamiltonianLindbladian} \rule[-2ex]{0pt}{0pt}
    & \multirow{2}{*}{$n_d^\ast(t,\varepsilon)= \Omega\!\left(\frac{t^2}{\varepsilon}\right)$ \cite[Theorem 12]{go2025SamplebasedHamiltonianLindbladian}} \\
    \cline{2-2}
    & \rule{0pt}{3.5ex} $n_d^\ast(t,\varepsilon)= O\!\left(\frac{d\| L \|_\infty^2t^2}{\varepsilon}\right)$ \textbf{[Theorem~\ref{cor:samp-comp-upper-bnd-WML}]} \rule[-2ex]{0pt}{0pt}
    & \\
    \hline
    Typical case
    & \rule{0pt}{3.5ex} $\overline{n}_d(t,\varepsilon,\delta)= O\!\left(\frac{t^2}{\varepsilon}\right)$ \textbf{[Theorem~\ref{thm:average-case-WML}]}
    & \\
    \hline
    Worst case
    & \rule{0pt}{3.5ex} $n_d^\ast(t,\varepsilon)= O\!\left(\frac{dt^2}{\varepsilon}\right)$ \textbf{[Theorem~\ref{cor:samp-comp-upper-bnd-WML}]}
    & $n_{\mathrm{WML}}(t,\varepsilon;\mathcal{L})=\Omega\!\left(\frac{dt^2}{\varepsilon}\right)$ \textbf{[Theorem~\ref{prop:rankone_lower}]} \\
    \hline
    \begin{tabular}{@{}c@{}}Special case \end{tabular}
    & $n_d^\ast(t,\varepsilon)= O\!\left(\frac{t^2}{\varepsilon}\right)$
    & $n_d^\ast(t,\varepsilon)= \Omega\!\left(\frac{t^2}{\varepsilon}\right)$ \cite[Theorem 12]{go2025SamplebasedHamiltonianLindbladian} \\
\end{tabular}
\caption{Summary of sample-complexity upper and lower bounds of sample-based simulation. The row labeled ``Special case'' corresponds to local Lindbladians where $d = O(1)$ is the dimension of the system on which the Lindblad operator acts.}
\label{table:results}
\end{table}

\bigskip

\noindent
\textbf{Structure of the paper.}
In Section~\ref{sec:background}, we introduce basic notation and review sample-based Lindbladian simulation, especially WML. We also recall the formal definition of the sample complexity of sample-based Hamiltonian and Lindbladian simulation~\cite[Definition~1]{go2025SamplebasedHamiltonianLindbladian}.  In Section~\ref{sec:upper-bound-WML}, we establish an improved upper bound on the sample complexity of Lindbladian simulation algorithms, based on a rigorous analysis of WML. In Section~\ref{sec:average-case}, we analyze the typical-case sample complexity, where the jump operators are drawn from an ensemble on its parameter space, i.e., the bipartite Hilbert space. We then prove concentration bounds for the Frobenius-normalized operator norm and derive the resulting regime-dependent scaling of the WML sample complexity. In Section~\ref{sec:worst-case}, we construct an explicit rank-one worst-case instance of $\left\|L\right\|_\infty^2=\Theta(1)$, which yields the WML sample-complexity that necessarily scales linearly in $d$. In Section~\ref{sec:implication}, we describe implications of dimensional factor reduction for quantum information theory, as well as its application to practical scenarios. Finally, in Section~\ref{sec:conclusion}, we provide some concluding remarks and directions for future research.

\section{Background}

\label{sec:background}

\subsection{Notation}
In this section, we provide some notations and simple propositions that will be used throughout the paper. Consider a quantum system $S$ associated with a Hilbert space $\mathcal{H}_{S}$. We write the set of quantum states (density operators) on this space as $\mathcal{D}(\mathcal{H}_{S})$. When the dimension $d$ needs to be specified, we use $\mathcal{H}^{d}_S$ and $\mathcal{D}(\mathcal{H}^{d}_S)$, or $\mathcal{H}_d$ and $\mathcal{D}(\mathcal{H}_d)$ if the label $S$ can be omitted.

Let the trace and adjoint of an operator $X$ be $\operatorname{Tr}[X]$ and $X^{\dag}$, respectively. For $p \in [1,\infty)$, the Schatten $p$-norm of $X$ is given by
\begin{align}\label{eq:schatten-norm}
    \|X\|_{p} \coloneqq \left( \operatorname{Tr}\!\left[\left(XX^{\dag}\right)^{\frac{p}{2}} \right]\right)^{\frac{1}{p}}.
\end{align}
Our focus will primarily be on the trace norm ($\left\| \cdot \right\|_{1}$) and the operator norm ($\left\| \cdot \right\|_{\infty}$).

To quantify the distinguishability of two states $\rho, \sigma \in \mathcal{D}(\mathcal{H}_{S})$, we evaluate the normalized trace distance:
\begin{align}
    D_{\rm tr} (\rho, \sigma) \coloneqq \frac{1}{2}\| \rho - \sigma \|_{1}.
\end{align}
This metric captures the maximum difference in measurement probabilities for the two states~\cite[Eq.~(9.22)]{wilde2017QuantumInformationTheory}. The prefactor $1/2$ bounds the distance to the interval $[0,1]$, although we may occasionally omit it for brevity. To measure state similarity, we use the fidelity~\cite{uhlmann1976TransitionProbabilityState},
\begin{align}\label{eq: fidelity definition}
    F(\rho, \sigma) \coloneqq \left\| \sqrt{\rho} \sqrt{\sigma} \right\|_1^2 = \operatorname{Tr}\!\left[\sqrt{\sqrt{\rho}\sigma\sqrt{\rho}}\right]^2.
\end{align}

When comparing two quantum channels $\mathcal{N}$ and $\mathcal{M}$, i.e., completely positive (CP) and trace-preserving (TP) maps, we use the normalized diamond distance~\cite{kitaev1997QuantumComputationsAlgorithms}:
\begin{align}
    \frac{1}{2}\| \mathcal{N} - \mathcal{M} \|_{\diamond} \coloneqq \frac{1}{2}\sup_{\rho \in \mathcal{D}(\mathcal{H}_{R}\otimes \mathcal{H}_{S})}\left\|(\mathcal{I}_{R}\otimes\mathcal{N})(\rho) - (\mathcal{I}_{R}\otimes\mathcal{M})(\rho) \right\|_{1},
\end{align}
where $R$ is a reference system of arbitrary dimension and $\mathcal{I}_{R}$ is the identity map on $R$. The optimization over all bipartite input states bounds the value to the interval $[0,1]$. By standard arguments, it suffices to restrict the dimension of the reference system such that $\dim(\mathcal{H}_{R}) = \dim(\mathcal{H}_{S})$~\cite[Theorem~9.1.1]{wilde2017QuantumInformationTheory}.

Finally, let $\operatorname{Tr}_{S_2}[\rho]$ denote the partial trace of a bipartite state $\rho \in \mathcal{D}(\mathcal{H}_{S_1} \otimes \mathcal{H}_{S_2})$ over system $S_2$, and let $\mathbb{I}_{S} \coloneqq \sum_{i}|i\rangle\!\langle i|_{S}$ be the identity operator acting on $S$. We define the SWAP operator between $S_1$ and $S_2$ as
\begin{align}\label{eq:swap-def}
    \operatorname{SWAP}_{S_1S_2} \coloneqq \sum_{i,j}|i\rangle\!\langle j|_{S_1} \otimes|j\rangle\!\langle i|_{S_2}.
\end{align}

\subsection{Sample-based quantum simulation}

\label{sec:sample-complex-def}

Sample-based simulation~\cite{lloyd2014QuantumPrincipalComponent, kimmel2017HamiltonianSimulationOptimal, wei2024SimulatingNoncompletelyPositive,go2025SamplebasedHamiltonianLindbladian} aims to implement target dynamics by means of a fixed quantum channel ${\cal P}^{(n)}$ acting on an arbitrary input state $\rho$ and $n$ identical copies of a program state $\pi$, up to an error $\varepsilon$. It may therefore be viewed as a form of $\varepsilon$-programmable quantum simulation~\cite{nielsen1997ProgrammableQuantumGate, banchi2020ConvexOptimizationProgrammable, jing2025ProgrammableOpenQuantum}, with the key structural constraint being that the program is provided as i.i.d.\ copies of a single quantum state. The main quantity of interest is then the \textit{minimum} number of program-state copies, $n$, required to simulate the dynamics for evolution time~$t$ with error at most $\varepsilon$ in the diamond distance. This motivates the definition of sample complexity as the minimum $n$ for which the target dynamics can be approximated to an accuracy of $\varepsilon$, where the minimization is taken over all $(n+1)$-partite completely positive and trace-preserving (CPTP) maps~\cite{go2025SamplebasedHamiltonianLindbladian}.

\begin{definition}[Sample complexity of sample-based quantum simulation]
\label{def: sample complexity}
The sample complexity of sample-based quantum simulation is defined as the minimum number $n^{\ast}_d(t,\varepsilon)$ of program states $\pi$ required to realize a quantum channel $e^{{\cal L}_\pi t}$, a one-parameter semi-group generated by a linear superoperator $\mathcal{L}_\pi$ corresponding to $\pi$ and acting on $\mathcal{B}(\mathcal{H}_d)$, within an error $\varepsilon$ in the normalized diamond distance. Formally, the sample complexity $n^{\ast}_d(t,\varepsilon)$ is defined as
\begin{align}
    n^{\ast}_d(t,\varepsilon)
    & \coloneqq  \inf_{\mathcal{P}^{(n)}\in\rm{CPTP}
    }\left\{  n\in\mathbb{N}:\sup_{\mathcal{L}_\pi}\frac{1}{2}\left\Vert \mathcal{P}^{(n)}
    \circ\mathcal{A}_{\pi^{\otimes n}}-e^{\mathcal{L}_\pi t}\right\Vert
    _{\diamond}\leq\varepsilon\right\},
\end{align}
where the appending channel $\mathcal{A}_{\pi^{\otimes n}}$ for the quantum state $\kappa$ on an arbitrary input state~$\zeta$ is defined as
\begin{equation}
\mathcal{A}_{\kappa^{\otimes n}}(\zeta)\coloneqq \zeta\otimes\kappa^{\otimes
n}.
\end{equation}
\end{definition}

Sample-based simulation has several features that make it attractive for near-term noisy intermediate-scale (NISQ) devices. Unlike gate-based simulation, it replaces long coherent gate sequences with the preparation of i.i.d.~copies of a program state, a resource that may be easier to generate and stabilize experimentally. Moreover, because the program is supplied as i.i.d.~copies, the sample-based setting is compatible with purification or distillation-type preprocessing of the program resource, which can be used to boost its fidelity prior to the simulation~\cite{bennett1996MixedStateEntanglement, bravyi2005UniversalQuantumComputation, yuan2023VirtualQuantumResource}.

A further advantage is a form of quantum privacy~\cite{aaronson2009QuantumCopyProtectionQuantum, kimmel2017HamiltonianSimulationOptimal, patel2023WaveMatrixLindbladization}. As in quantum learning settings where useful processing can be performed without fully revealing the underlying quantum data, sample-based simulation allows one to implement dynamics without exposing a complete description of the program state. Indeed, reconstructing an unknown program state $\pi$ by tomography generally requires substantially more copies than are needed simply to use $\pi$ as a simulation resource. We note that the program state for simulating a general Lindbladian needs to be a $d^2$-dimensional quantum state, as explained in detail in Appendix~\ref{appendix:state2dyn}. Meanwhile, quantum state tomography requires at least ${\Omega}(d^{2}r/\varepsilon^2)$ copies for a $d^2$-dimensional rank-$r$ quantum state~\cite{haah2016SampleoptimalTomographyQuantum, haah2017SampleoptimalTomographyQuantum, yuen2023ImprovedSampleComplexity, patel2023WaveMatrixLindbladizationa,  anshu2023SurveyComplexityLearning}, whereas the simulation task can require fewer copies. Consequently, sample-based algorithms enable accurate simulation without fully learning the program state that encodes the dynamics.

A lower bound on sample complexity follows from Def.~\ref{def: sample complexity}, by comparing quantum state and channel discrimination. Here we summarize the result on the lower bound established in Ref.~\cite[Theorem 12]{go2025SamplebasedHamiltonianLindbladian}, namely
\begin{equation}\label{eq:lower_bound_result}
n^{\ast}_d(t,\varepsilon) \geq 10^{-4}\,\frac{t^{2}}{\varepsilon},
\end{equation}
for the case of dephasing Lindbladian on projected two-dimensional Hilbert space, given the condition $\varepsilon \in [0, \min\{0.039,\,0.013 t\}]$. We say a sample-based quantum simulation is optimal if the sample complexity of such an algorithm has a matching upper bound, i.e., $O(t^2/\varepsilon)$. For example, Hamiltonian dynamics can be generated in the sample-based simulation framework by coherent Lindbladians parameterized as $\mathcal{L}_\pi = -i[\pi,\cdot]$. The overall normalization factor can be absorbed into the evolution time $t$ without changing the scale of the dynamics. The sample complexity lower bound in this case is also $\Omega(t^2/\varepsilon)$~\cite{wei2024SimulatingNoncompletelyPositive, go2025SamplebasedHamiltonianLindbladian}, with a matching upper bound achieved by the Lloyd--Mohseni--Rebentrost (LMR) protocol, also known as density matrix exponentiation~\cite{lloyd2014QuantumPrincipalComponent, kimmel2017HamiltonianSimulationOptimal, wei2024SimulatingNoncompletelyPositive, go2025SamplebasedHamiltonianLindbladian}. Therefore, we can conclude that the sample complexity of sample-based Hamiltonian simulation is $n^{\ast}_d(t,\varepsilon) \in \Theta({t^{2}}/{\varepsilon})$. Recently, the WML algorithm~\cite{patel2023WaveMatrixLindbladization, patel2023WaveMatrixLindbladizationa} for Lindbladian simulation was shown to achieve a matching sample complexity upper bound up to a constant factor~\cite{go2025SamplebasedHamiltonianLindbladian}..

\subsection{Review of wave matrix Lindbladization}
In the WML algorithm, we set $\mathcal{P}^{(n)}$ to be a sequence of $n$ repeated interactions between the input state and a fresh copy of the program state, analogous to the construction used in sample-based Hamiltonian simulation~\cite{kimmel2017HamiltonianSimulationOptimal,wei2024SimulatingNoncompletelyPositive,go2025SamplebasedHamiltonianLindbladian}. For the sake of clarity, we will consider a simple Lindbladian of the following form,
\begin{equation}
    \label{eq:lindblad_dynamics}
    \mathcal{L}^{(1)}(\rho) \coloneqq L \rho L^\dagger - \frac{1}{2} \left\{L^\dagger L, \rho \right\}. 
\end{equation}
This can be extended to general Lindbladian dynamics with multiple dissipative and coherent terms~\cite{patel2023WaveMatrixLindbladizationa}. Considering a time step $\Delta\coloneqq \frac{t}{n}$, then a single step of WML is given by
\begin{equation}
\widetilde{e^{\mathcal{L}\Delta}}
(\rho)\coloneqq\operatorname{Tr}_{23}[e^{\mathcal{M}\Delta}  (\rho\otimes\pi_{L})],\label{eq:WML-actual-one-step}
\end{equation}
where $\pi = \ket{L} \!\bra{L}$ is the bipartite and pure program state in a $d\times
d$-dimensional Hilbert space. Here, we adopt the following convention for vectorization of operators:
\begin{align}
|{L}\rangle & \coloneqq \left(  L\otimes I\right)  |\Gamma\rangle,\\
|\Gamma\rangle & \coloneqq \sum_{i=1}^{d}|i\rangle|i\rangle,
\end{align}
and we assume that the program states are normalized, i.e., $\left\| L\right\|_2=1$. Also, ${\mathcal{M}}$ is a fixed Lindbladian generator independent of the target Lindbladian dynamics, given as
\begin{align}
\mathcal{M}(\omega)  & \coloneqq M\omega M^{\dag}-\frac{1}{2}\left\{  M^{\dag} M,\omega\right\}  , \label{eq:WML-M-op} \\
M  & \coloneqq \frac{1}{\sqrt{d}} \left(  I_{1}\otimes|\Gamma\rangle\!\langle\Gamma|_{23}\right)  \left(
\mathrm{SWAP}_{12}\otimes I_{3}\right). \label{eq:WML-M-jump}
\end{align}
Throughout the paper, we will refer to $\mathcal{M}$ as the WML Lindbladian operator, and we note that an algorithm for implementing it has been proposed in \cite{sims2025DigitalQuantumSimulations}. Using the general identity $A\otimes B\ket{C} = \ket{ACB^T}$, we show that 
\begin{align}
M(\rho\otimes\pi_L)M^\dagger & =L\rho L^\dagger\otimes\ket{\Gamma}\!\bra{\Gamma}/d, \label{eq:1st-order-1}\\
\operatorname{Tr}_{23}M^\dagger M(\rho\otimes\pi_L) & =L^\dagger L\rho.
\label{eq:1st-order-2}
\end{align}
This leads to the fact that~\eqref{eq:WML-actual-one-step} is equivalent to $e^{\mathcal{L} \Delta}$ up to the first order in $\Delta$. Since $n$ compositions of $\widetilde{e^{\mathcal{L}\Delta}}$, i.e., $ \widetilde{e^{\mathcal{L}t}}\coloneq\left(\widetilde{e^{\mathcal{L}\Delta}}\right)^n$ is a CPTP map, the error bound
\begin{equation}
    \frac{1}{2}\left\|\widetilde{e^{\mathcal{L}t}}-e^{\mathcal{L}t}\right\|_\diamond\leq\varepsilon
\end{equation}
upper bounds the sample complexity $n_{d}^{\ast}(t,\varepsilon)$.
Noting that $\|\mathcal{M}\|_\diamond\leq2d$ and $\|\mathcal{L}\|_\diamond\leq2$~\cite[Appendix D]{go2025SamplebasedHamiltonianLindbladian}, the following upper bound on the sample complexity was proven in Ref.~\cite[Theorem 9]{go2025SamplebasedHamiltonianLindbladian}:
\begin{equation}
    n_{d}^{\ast}(t,\varepsilon)\leq 3d^2\frac{t^2}{\varepsilon}. \label{eq: old-upper}
\end{equation}

Although the scaling in $(t,\varepsilon)$ matches the fundamental lower bound in~\eqref{eq:lower_bound_result}, the $O(d^2)$ gap is not negligible for a large Hilbert space. In fact, the quadratic scaling in dimension compromises the aforementioned quantum privacy advantage of sample-based simulation methods~\cite{aaronson2009QuantumCopyProtectionQuantum, kimmel2017HamiltonianSimulationOptimal}, and it exhibits exponential scaling in the number of qubits for many-body systems if the Lindbladian is not local.

\section{Improved upper bound on the sample complexity of wave matrix Lindbladization} \label{sec:upper-bound-WML}
In this section, we derive a tighter upper bound on the non-asymptotic sample complexity of sample-based Lindbladian simulation along with an explicit coefficient. The first part of the main result is summarized as follows:
\begin{lemma} \label{Thm: error bound WML}
Suppose that the Lindbladian $\mathcal{L}$ acts nontrivially on a $d$-dimensional quantum system, where $d\in \mathbb{N}$ and $d\geq 2$. Let $t\geq0$, let $n\in\mathbb{N}$ be such that $n>2dt$, and let $\Delta\coloneqq\frac{t}{n}$. 
For every Lindbladian $\mathcal{L}$ satisfying $\left\|L\right\|_2=1$ in~\eqref{eq:lindblad_dynamics}, the error of wave matrix Lindbladization satisfies the following bound:
    \begin{equation} \label{eq:WML-err-bound}
        \frac{1}{2} \left\Vert e^{\mathcal{L}t}-\left(  \widetilde{e^{\mathcal{L}\Delta}}\right)
        ^{n}\right\Vert _{\diamond}\leq\frac{(2d+3)t^2}{8n}\left\|L\right\|_\infty^2,
    \end{equation}
where $\widetilde{e^{\mathcal{L}\Delta}}$ is a single-step evolution of WML in~\eqref{eq:WML-actual-one-step}. 
\end{lemma}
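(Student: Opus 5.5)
We use a telescoping argument and reduce everything to a one-step error estimate, which we then compute as exactly as possible instead of bounding it crudely with $\|\mathcal{M}\|_\diamond\le 2d$. Both $e^{\mathcal{L}\Delta}$ and $\widetilde{e^{\mathcal{L}\Delta}}$ are CPTP, so the diamond norm is submultiplicative with unit norm on channels. The standard hybrid argument then gives
\begin{equation}
\frac12\left\|e^{\mathcal{L}t}-\left(\widetilde{e^{\mathcal{L}\Delta}}\right)^n\right\|_\diamond\le \frac n2\left\|e^{\mathcal{L}\Delta}-\widetilde{e^{\mathcal{L}\Delta}}\right\|_\diamond .
\end{equation}
It therefore suffices to show that the one-step error is at most $\frac{(2d+3)\Delta^2}{8}\|L\|_\infty^2$ whenever $d\Delta<1/2$, which is the role of the hypothesis $n>2dt$.

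\textbf{Step 1: Taylor remainders.} Write $e^{\mathcal{L}\Delta}=\mathrm{id}+\Delta\mathcal{L}+R_{\mathcal{L}}$ and $\widetilde{e^{\mathcal{L}\Delta}}=\operatorname{Tr}_{23}[(\mathrm{id}+\Delta\mathcal{M}+R_{\mathcal{M}})(\cdot\otimes\pi_L)]$. By \eqref{eq:1st-order-1}--\eqref{eq:1st-order-2}, the first-order terms coincide exactly, so the one-step error is
\begin{equation}
R_{\mathcal{L}}-\operatorname{Tr}_{23}\circ R_{\mathcal{M}}\circ\mathcal{A}_{\pi_L}.
\end{equation}
For the first term, I would use the integral form $R_{\mathcal{L}}=\int_0^\Delta(\Delta-s)\mathcal{L}^2e^{\mathcal{L}s}\,ds$, together with $\|\mathcal{L}\|_\diamond\le 2\|L\|_\infty^2$. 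The sharp bound uses $\|L\rho L^\dagger\|_1\le\|L\|_\infty^2$ and $\|L^\dagger L\|_\infty=\|L\|_\infty^2$. This yields a contribution of order $\Delta^2\|L\|_\infty^4\le\Delta^2\|L\|_\infty^2$, since $\|L\|_\infty\le\|L\|_2=1$. This accounts for part of the constant $3/8$.

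\textbf{Step 2: the WML remainder (main obstacle).} The crude bound $\|\mathcal{M}\|_\diamond^2\le 4d^2$ is what produced $d^2$ in \eqref{eq: old-upper}. Instead, I would compute $\mathcal{M}^2$ applied to $\rho\otimes\pi_L$ explicitly, and more generally $\mathcal{M}$ applied to the states generated along the evolution. The structural facts to exploit are the following.
\begin{itemize}
\item $M^\dagger M=\frac1d(\mathrm{SWAP}_{12}\otimes I)(I\otimes|\Gamma\rangle\!\langle\Gamma|)(\mathrm{SWAP}_{12}\otimes I)$, which is $d$ times a rank-projector-type operator.
\item After one jump, registers $23$ sit in $|\Gamma\rangle\!\langle\Gamma|/d$. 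On that state, $M$ acts with norm $\sqrt d$ but produces only $\operatorname{Tr}$-like contractions.
\end{itemize}
Expanding $\mathcal{M}^2(\rho\otimes\pi_L)$ into its terms ($MM\cdot M^\dagger M^\dagger$, anticommutator cross terms, and so on) and partial-tracing, each term should be bounded by $d\|L\|_\infty^2$ or by $\|L\|_\infty^2$. The key is that every term contains at least one factor of $L$ acting on the input, giving $\|L\|_\infty$, while the other contractions involve either $\langle\Gamma|\cdot|\Gamma\rangle/d$ (contributing $d$) or $\|L\|_2=1$. The hardest part is controlling the full remainder $\int_0^\Delta(\Delta-s)\mathcal{M}^2e^{\mathcal{M}s}\,ds$ rather than just $\mathcal{M}^2$. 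Here I would try to show that the subspace spanned by $\{\rho\otimes\pi_L$-type, $\sigma\otimes\Gamma/d$-type, and coherences between them$\}$ is invariant under $\mathcal{M}$. I would then solve $e^{\mathcal{M}s}$ on this block, or bound the higher terms as a geometric series in $d\Delta$ using $d\Delta<1/2$. Summing the series gives the factor $2d/8=\Delta^2 d/4$ per step, and the $\|L\|_\infty^2$ factor comes from the leading $L$.

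\textbf{Step 3: combine.} Adding the two remainder bounds gives a one-step error of at most $\frac{(2d+3)\Delta^2}{4}\|L\|_\infty^2$ in the unnormalized diamond norm. Multiplying by $n/2$ and substituting $\Delta=t/n$ gives \eqref{eq:WML-err-bound}. If the constants do not come out exactly, I would reorganize the expansion by matching the second-order terms as well: both $\mathcal{L}^2$ and $\operatorname{Tr}_{23}\mathcal{M}^2\mathcal{A}$ are explicit, so their difference can be bounded directly.
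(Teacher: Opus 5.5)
\textbf{Verdict: genuine gap, and it cannot be closed as stated.} Your architecture matches the paper's: a hybrid (telescoping) reduction to one step, then an exact treatment of the WML step using $M^2=M/\sqrt d$ and $MM^\dagger M=dM$. The paper's version of your ``invariant block'' is the span of the five superoperators $\mathcal S_0,\dots,\mathcal S_4$ in Appendix~\ref{appendix:main-upper-proof}. This span is closed under left composition with $\mathcal M$, and exponentiating the $5\times5$ transfer matrix gives $\widetilde{e^{\mathcal L\Delta}}$ in closed form. (A small slip: $M^\dagger M=|\Gamma\rangle\!\langle\Gamma|_{13}\otimes I_2$ with no factor $1/d$, i.e.\ $d$ times a projector.)

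The real problem is your error budget in unnormalized diamond norm: $\tfrac34\Delta^2\|L\|_\infty^2$ for $R_{\mathcal L}$ and $\tfrac d2\Delta^2\|L\|_\infty^2$ for the WML remainder. No argument can meet it, because the constant $\tfrac{2d+3}{8}$ in \eqref{eq:WML-err-bound} is false for $d\ge5$.

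\textbf{Counterexample.} Using $M\{M^\dagger M,X\}M^\dagger=2d\,MXM^\dagger$, one finds
\[
\tfrac{\Delta^2}{2}\operatorname{Tr}_{23}\big[\mathcal M^2(\rho\otimes\pi_L)\big]=\tfrac{d\Delta^2}{4}\Big[\tfrac12\{L^\dagger L,\rho\}-2L\rho L^\dagger+\mathcal D_1(L\rho L^\dagger)\Big].
\]
The closed form of Appendix~\ref{appendix:main-upper-proof} shows that all higher orders are $O(d^2\Delta^3)$. Take $L=|0\rangle\!\langle1|$, so $\|L\|_2=\|L\|_\infty=1$, and $\rho=|1\rangle\!\langle1|$. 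Then:
\begin{itemize}
\item The WML remainder has trace norm $\approx(d-\tfrac12)\Delta^2$, about twice what your Step~2 needs.
\item $R_{\mathcal L}(\rho)=(e^{-\Delta}-1+\Delta)(|1\rangle\!\langle1|-|0\rangle\!\langle0|)$ has trace norm $\approx\Delta^2>\tfrac34\Delta^2$, so Step~1 fails too.
\item Your Step~3 fallback of matching second orders does not rescue it:
\[
\widetilde{e^{\mathcal L\Delta}}(\rho)-e^{\mathcal L\Delta}(\rho)=\Delta^2\Big[\tfrac{d-1}{4}|1\rangle\!\langle1|-\tfrac{2d-3}{4}|0\rangle\!\langle0|+\tfrac14\textstyle\sum_{k\ge2}|k\rangle\!\langle k|\Big]+O(d^2\Delta^3).
\]
Its normalized trace norm is $\tfrac{2d-3}{4}\Delta^2+O(d^2\Delta^3)$.
\end{itemize}
Since $\tfrac{2d-3}{4}>\tfrac{2d+3}{8}$ for $d\ge5$, taking $n=1$ and $t$ small (so $n>2dt$ holds) violates \eqref{eq:WML-err-bound}. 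Concretely, for $d=100$ and $t=10^{-3}$, the closed form gives a trace distance $\approx4.76\times10^{-5}$ on this input, against the claimed $\approx2.54\times10^{-5}$. Small perturbations of $L$ that act nontrivially on all of $\mathbb C^d$ keep the violation.

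\textbf{Where the paper's proof breaks.} The paper reaches its constant through two slips.
\begin{itemize}
\item In Appendix~\ref{appendix:main-upper-proof}, passing from \eqref{eq:tmp-pf-2} to \eqref{eq:tmp5} turns the coefficient $+\tfrac{d^2-4}{d^2-2}$ of $f(d\Delta/2)$ into $-\tfrac{d^2}{d^2-2}$. Done correctly, the bound is $\|\widetilde{\mathcal L}-\mathcal L\|_\diamond\le\tfrac{3d\Delta}{4}\|L\|_\infty^2$, which the example above saturates at leading order.
\item In Lemma~\ref{lem:single-step-WML-bound}, applying $e^{y}-y-1\le\tfrac34y^2$ with $y=2\Delta\|L\|_\infty^2$ gives $3\Delta^2\|L\|_\infty^4$, not $\tfrac34\Delta^2\|L\|_\infty^2$.
\end{itemize}

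\textbf{What survives.} Your plan carried out correctly (or the paper's chain with these fixes) proves the lemma with a larger constant, e.g.\ $\tfrac{(3d+6)t^2}{4n}\|L\|_\infty^2$. The $O(d\|L\|_\infty^2t^2/\varepsilon)$ scaling survives; the factor $\tfrac{2d+3}{8}$ does not.
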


\begin{proof}
By taking $\Delta = \frac{t}{n}$, consider the following chain of inequalities:
\begin{align}
\frac{1}{2} \left\Vert e^{\mathcal{L}t}-\left(  \widetilde{e^{\mathcal{L}\Delta}}\right)
^{n}\right\Vert _{\diamond} &  = \frac{1}{2} \left\Vert \left(  e^{\mathcal{L}\Delta
}\right)  ^{n}-\left(  \widetilde{e^{\mathcal{L}\Delta}}\right)
^{n}\right\Vert _{\diamond}\\
&  \leq n \cdot \frac{1}{2}  \left\Vert e^{\mathcal{L}\Delta}-\widetilde{e^{\mathcal{L}\Delta}
}\right\Vert _{\diamond}\\
&  \leq n\frac{\Delta^2(2d+3)}{8} \left\|L\right\|_\infty^2  \\
&  =\frac{t^2(2d+3)}{8n} \left\|L\right\|_\infty^2,
\end{align}
where we have inductively applied the subadditivity of diamond distance (see, e.g., \cite[Appendix~B]{go2025SamplebasedHamiltonianLindbladian}) to obtain the first inequality. The second inequality follows from
Lemma~\ref{lem:single-step-WML-bound}, which then implies
\eqref{eq:WML-err-bound} after substituting $\Delta=\frac{t}{n}$.
\end{proof}

\medskip

From the error bound of the WML algorithm, we immediately conclude the following sample complexity upper bound:

\begin{theorem}[Upper bound on the sample complexity of WML]
\label{cor:samp-comp-upper-bnd-WML}
Lemma~\ref{Thm: error bound WML} implies  the following upper bound on the sample complexity of sample-based Lindbladian simulation:
\begin{equation}\label{eq:WML-sample-bound}
    n^{\ast}_d(t,\varepsilon) \leq  \frac{2d+3}{8} \left\|L\right\|_\infty^2 \left( \frac{t^2}{\varepsilon} \right).
\end{equation}
\end{theorem}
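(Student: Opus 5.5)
The bound follows from Lemma~\ref{Thm: error bound WML} by choosing $n$ so that its right-hand side is at most $\varepsilon$. The WML protocol is one admissible choice of $\mathcal{P}^{(n)}$ in Definition~\ref{def: sample complexity}: the $n$-fold composition of the single steps \eqref{eq:WML-actual-one-step}, each consuming a fresh copy of $\pi_L$, is an $(n+1)$-partite CPTP map applied to $\rho\otimes\pi_L^{\otimes n}$. Hence $n^\ast_d(t,\varepsilon)$ is at most the smallest $n$ for which the WML diamond-distance error is at most $\varepsilon$.

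\textbf{Choice of $n$.} I take $n$ to satisfy
\begin{equation}
n \;\ge\; \frac{(2d+3)t^2}{8\varepsilon}\left\|L\right\|_\infty^2 .
\end{equation}
Substituting this into \eqref{eq:WML-err-bound} gives
\begin{equation}
\tfrac{1}{2}\bigl\|e^{\mathcal{L}t}-\bigl(\widetilde{e^{\mathcal{L}\Delta}}\bigr)^n\bigr\|_\diamond \;\le\; \varepsilon .
\end{equation}
Taking the infimum over $n$ then yields \eqref{eq:WML-sample-bound}, up to rounding to an integer (the paper suppresses the ceiling, or one can absorb it).

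\textbf{Hypothesis $n>2dt$.} The lemma needs $n>2dt$, so I must check this is compatible with the choice of $n$. In the regime where the bound is meaningful ($\varepsilon\le 1$ and $t$ not too small), the required $n$ scales like $d t^2/\varepsilon$, which dominates $2dt$ whenever $t\gtrsim\varepsilon$. For the remaining small-$t$ regime I would take $n=\max\{\lceil\cdot\rceil,\lfloor 2dt\rfloor+1\}$, or simply note that the theorem is intended for that natural regime, as in the cited prior bound \eqref{eq: old-upper}.

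\textbf{The $\sup$ over Lindbladians.} This is the main subtlety. Definition~\ref{def: sample complexity} takes a supremum over $\mathcal{L}_\pi$, while \eqref{eq:WML-sample-bound} depends on $\|L\|_\infty$. I would read the theorem as a bound for the class of Lindbladians with a given (or bounded) value of $\|L\|_\infty^2$; the bound is uniform over that class because the lemma's bound depends on $L$ only through $\|L\|_\infty$. Since $\|L\|_\infty\le\|L\|_2=1$, the supremum over all normalized $L$ gives the uniform bound $\frac{2d+3}{8}\,t^2/\varepsilon$.
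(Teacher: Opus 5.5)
Your argument is the paper's: the paper's entire proof is to read off $n$ from Lemma~\ref{Thm: error bound WML} by making its right-hand side at most $\varepsilon$. Your handling of the ceiling and of the $\sup$ over $\mathcal{L}_\pi$ also matches the paper's usage. You read the bound as holding per $L$, or per class with bounded $\|L\|_\infty$, which is how the paper applies it in Theorem~\ref{thm:average-case-WML}.

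The one step that fails as written is your check of the lemma's hypothesis $n>2dt$. The $n$ you need is $\frac{2d+3}{8}\|L\|_\infty^2\,t^2/\varepsilon$. This is of order $dt^2/\varepsilon$ only when $\|L\|_\infty^2=\Theta(1)$. It exceeds $2dt$ iff $(2d+3)\|L\|_\infty^2\,t>16d\varepsilon$, i.e.\ roughly $t\gtrsim 8\varepsilon/\|L\|_\infty^2$. For Ginibre-type operators with $\|L\|_\infty^2=O(1/d)$, which is exactly where the paper applies this theorem, that means $t\gtrsim d\varepsilon$, not $t\gtrsim\varepsilon$. Your fallback $n=\max\{\lceil\cdot\rceil,\lfloor 2dt\rfloor+1\}$ does not give the stated bound either. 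In that regime the $2dt$ term dominates and reinstates the dimension factor the theorem is meant to remove.

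The repair is to look at what the hypothesis is used for. Lemma~\ref{lem:main-upper-lemma} holds for every $\Delta\ge 0$. In the proof of Lemma~\ref{lem:single-step-WML-bound}, the condition $\Delta<\frac{1}{2d}$ serves only to place $x=2\Delta\|L\|_\infty^2$ in $[0,1]$ for the inequality $e^x-x-1\le\tfrac{3}{4}x^2$. So $n>2dt$ can be weakened to $n\ge 2t\|L\|_\infty^2$. Your choice of $n$ satisfies this whenever $(2d+3)t\ge 16\varepsilon$, a condition independent of $\|L\|_\infty$; this is what your ``$t\gtrsim\varepsilon$'' heuristic actually needs.

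Below that threshold, the right-hand side of the theorem is smaller than $2t\|L\|_\infty^2$. There $n=1$ already suffices for $\varepsilon\le 7/32$, so only the integer rounding that you and the paper both suppress is at stake. The paper's one-line derivation passes over the hypothesis $n>2dt$ silently, so with this repair your argument is more complete than the paper's.
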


\begin{lemma}
\label{lem:single-step-WML-bound}
Suppose that $\mathcal{L}$ acts on a $d$-dimensional quantum system, where $d\in\mathbb{N}$ and $d\geq 2$. Let $t\geq0$, and let $n\in\mathbb{N}$ be such that
$n>2dt$, and set $\Delta\coloneqq\frac{t}{n}$, so that $\Delta<\frac{1}{2d}$.
For the channels $e^{\mathcal{L}\Delta}$ and $\widetilde{e^{\mathcal{L}\Delta
}}$ with time interval $\Delta$ in
\eqref{eq:WML-actual-one-step}, the following inequality holds:
    \begin{equation}
        \frac{1}{2}\left\Vert e^{\mathcal{L}\Delta}-\widetilde{e^{\mathcal{L}\Delta}}\right\Vert
        _{\diamond}\leq\frac{\Delta^2(2d+3)}{8} \left\|L\right\|_\infty^2 .
    \end{equation}
\end{lemma}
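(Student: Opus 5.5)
The plan is to compare the two one-step channels through their Taylor series in $\Delta$, using that they agree to first order by \eqref{eq:1st-order-1}--\eqref{eq:1st-order-2}. For an input $\rho_{RS}$, where $R$ is a reference system, write
\begin{align}
e^{\mathcal{L}\Delta}-\widetilde{e^{\mathcal{L}\Delta}}
=\sum_{k\ge 2}\frac{\Delta^k}{k!}\Big(\mathcal{L}^k-\operatorname{Tr}_{23}\big[\mathcal{M}^k(\,\cdot\,\otimes\pi_L)\big]\Big).
\end{align}
The zeroth- and first-order terms cancel. The task is then to bound each $k\ge2$ term in diamond norm by a quantity proportional to $\|L\|_\infty^2$, growing at most geometrically in $k$ with ratio of order $d$. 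The hypothesis $\Delta<\frac{1}{2d}$ is what makes the resulting series summable with an $O(\Delta^2)$ total.

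\textbf{Target-side terms.} I will not use the crude bound $\|\mathcal{L}\|_\diamond\le 2$. Instead I will use that $\mathcal{L}^k$ for $k\ge 1$ always contains $L$ twice: every term is a product of at least one factor $L\,\cdot\,L^\dagger$ or $L^\dagger L$. Combined with $\|L\|_\infty\le\|L\|_2=1$, this should give $\|\mathcal{L}^k\|_\diamond\le 2^{k}\|L\|_\infty^2$, or a similar bound. Since $2^k\le (2d)^{k-2}\cdot 4$ for $d\ge 2$, these terms sum to $O(\Delta^2\|L\|_\infty^2)$.

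\textbf{WML-side terms (main obstacle).} Here I will compute $\mathcal{M}^k(\rho\otimes\pi_L)$ structurally rather than through $\|\mathcal{M}\|_\diamond\le 2d$ alone, since the latter would give $d^2$. The first application of $\mathcal{M}$ produces operators of three kinds:
\begin{itemize}
\item $L\rho L^\dagger\otimes|\Gamma\rangle\!\langle\Gamma|/d$;
\item $M^\dagger M(\rho\otimes\pi_L)$, together with its adjoint counterpart;
\item cross terms of the form $(L^\dagger L\rho)$ paired with the program register.
\end{itemize}
The key point is that all of these are already "quadratic in $L$": each carries a factor whose trace norm is bounded by $\|L\|_\infty^2\|\rho\|_1$ after tracing out $23$. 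Using $\|L\|_\infty^2$ rather than $\|L\|_2^2$ relies on $\langle L|L\rangle=1$, so that the ancilla part has unit trace and $L$ appears only in the system part. Each subsequent application of $\mathcal{M}$ multiplies the trace norm by at most $\|\mathcal{M}\|_\diamond\le 2d$. I would try to show that the second application costs only about $(2d+1)$ rather than $(2d)^2$, since $M^\dagger M$ acting on the $|\Gamma\rangle\!\langle\Gamma|/d$ sector has operator norm $1$.

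\textbf{Bounding the series.} This step should yield
\begin{align}
\Big\|\operatorname{Tr}_{23}\mathcal{M}^k(\cdot\otimes\pi_L)\Big\|_\diamond\lesssim \|L\|_\infty^2(2d)^{k-2}\,c_d
\end{align}
with $c_d$ linear in $d$. Summing $\sum_{k\ge2}\Delta^k(2d)^{k-2}/k!$ under $2d\Delta<1$ gives $\Delta^2$ times a constant. I expect the hardest part to be getting exactly the coefficient $(2d+3)/8$. My first attempt would be to treat the $k=2$ term exactly, computing $\operatorname{Tr}_{23}\mathcal{M}^2(\rho\otimes\pi_L)-\mathcal{L}^2(\rho)$ by direct calculation. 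The tail $k\ge3$ would then be absorbed using $\Delta<1/(2d)$, together with $e^{x}-1-x\le x^2$ for $x<1$ or a similar elementary inequality. Finally, I would divide by $2$ for the normalized diamond distance.
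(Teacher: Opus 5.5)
\textbf{Gap in the WML-side estimate.} Your Taylor-series framing and your target-side bound $\|\mathcal{L}^k\|_\diamond\le 2^k\|L\|_\infty^2$ are fine. The mechanism you propose for $\operatorname{Tr}_{23}\mathcal{M}^k(\cdot\otimes\pi_L)$ does not work. The claim that $\mathcal{M}(\rho\otimes\pi_L)$ is ``quadratic in $L$'' with trace norm $\lesssim\|L\|_\infty^2$ holds only \emph{after} $\operatorname{Tr}_{23}$. Every further application of $\mathcal{M}$ acts on systems $1,2,3$ \emph{before} the partial trace, and $\operatorname{Tr}_{23}\circ\mathcal{M}$ does not factor through $\operatorname{Tr}_{23}$. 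On the full space the first-order operator is not small. Indeed, $M^\dagger M=|\Gamma\rangle\!\langle\Gamma|_{13}\otimes I_2$ entangles system $1$ with register $3$, so $L$ does not sit ``only in the system part''. For $\rho=|\psi\rangle\!\langle\psi|$ one gets $\|M^\dagger M(\rho\otimes\pi_L)\|_1=\sqrt d\,\|L\psi\|$. Propagating this with $\|\mathcal{M}\|_\diamond\le 2d$ per step bounds the $k$-th term only by $O(d^{k-1/2}\|L\|_\infty)$. Already at $k=2$ this gives $d^{3/2}\Delta^2$ in the worst case instead of $d\Delta^2$. For Ginibre-typical $L$ with $\|L\|_\infty\approx 2/\sqrt d$ it gives $O(d\Delta^2)$ instead of $O(\Delta^2)$.

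The idea you are missing is the exact resummation behind Lemma~\ref{lem:main-upper-lemma}. Since $M^2=M/\sqrt d$ and $MM^\dagger M=dM$, the map $\mathcal{M}$ sends $\operatorname{span}\{\mathcal S_0,\dots,\mathcal S_4\}$ into itself through the $5\times5$ matrix $T$. Hence $\operatorname{Tr}_{23}[e^{\mathcal{M}\Delta}(\rho\otimes\pi_L)]$ is an explicit combination of $\rho$, $\tfrac12\{L^\dagger L,\rho\}$, $L\rho L^\dagger$ and $\mathcal D_1[L\rho L^\dagger]$, with coefficients read off from $e^{T\Delta}$. The last three maps each have diamond norm at most $\|L\|_\infty^2$, and all $d$-dependence sits in scalar coefficients. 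Your conjectured ``$c_d$ linear in $d$'' is true, but only because of this structure.

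\textbf{The constant $(2d+3)/8$ cannot be reached.} Your own plan of computing the $k=2$ term exactly shows that the inequality as stated fails for $d\ge5$. Row $0$ of $T^2$ gives $\operatorname{Tr}_{23}\mathcal{M}^2(\rho\otimes\pi_L)=-dL\rho L^\dagger+\tfrac d4\{L^\dagger L,\rho\}+\tfrac d2\mathcal D_1[L\rho L^\dagger]$. Take $L=|0\rangle\!\langle1|$, so $\|L\|_2=\|L\|_\infty=1$, and input $|1\rangle\!\langle1|$, for which $\mathcal{L}^2(|1\rangle\!\langle1|)=|1\rangle\!\langle1|-|0\rangle\!\langle0|$. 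The second-order difference is $\tfrac{\Delta^2}{2}$ times a diagonal operator with entries $-(d-\tfrac32)$, $\tfrac{d-1}{2}$, and $\tfrac12$ (the last with multiplicity $d-2$). The normalized trace distance is therefore $\tfrac{(2d-3)\Delta^2}{4}+O(\Delta^3)$, which exceeds $\tfrac{(2d+3)\Delta^2}{8}$ once $d\ge5$. The exact closed form with $d=10$, $\Delta=10^{-2}$ gives about $4.1\times10^{-4}$, against the claimed $2.875\times10^{-4}$.

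The paper's argument contains two slips. First, passing from \eqref{eq:tmp-pf-2} to \eqref{eq:tmp5} flips the sign of the $f(\Delta d/2)$ term; the correct estimate is $\|\widetilde{\mathcal L}-\mathcal L\|_\diamond\le\tfrac{3\Delta d}{4}\|L\|_\infty^2$. Second, applying $e^x-x-1\le\tfrac34x^2$ at $x=2\Delta\|L\|_\infty^2$ gives $3\Delta^2\|L\|_\infty^4$, not $\tfrac34\Delta^2\|L\|_\infty^2$. This is exactly what your $2^k\|L\|_\infty^2$ bound yields, and the same $L$ shows $\|e^{\mathcal L\Delta}-\mathcal I-\Delta\mathcal L\|_\diamond\approx\Delta^2>\tfrac34\Delta^2$. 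With both corrections the paper's route gives $\tfrac{3d+6}{4}\Delta^2\|L\|_\infty^2$. So the $O(d\|L\|_\infty^2\Delta^2)$ scaling, and with it the $O(d\|L\|_\infty^2t^2/\varepsilon)$ sample bound, survives with a larger constant. That corrected inequality is what you should aim to prove.
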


\begin{proof}
Let us label the $d$-dimensional input system on which the Lindbladian acts nontrivially as 1, and let us label the reference system as 0. Let us label the $d\times d$-dimensional program system as $2\otimes 3$, which is prepared in the program state $\pi_L$. We take a series expansion with respect to $\Delta$, which leads to
\begin{align}
    \widetilde{e^{\mathcal{L}\Delta}}(\rho)  & =\operatorname{Tr}_{23}[\left(
    \mathcal{I}\otimes e^{\mathcal{M}\Delta}\right)  (\rho\otimes\pi
    _{L})]\\
    & =\rho+\left(  \mathcal{I}\otimes\mathcal{L}\right)  (\rho)\Delta
    +\sum_{k=2}^{\infty}\operatorname{Tr}_{23}[\left(  \mathcal{I}
    \otimes\mathcal{M}^{k}\right)  (\rho\otimes\pi_{L})]\frac{\Delta^{k}}{k!},\label{eq:tmp-92}\\
    e^{\mathcal{L}\Delta}(\rho)  & =\rho+\left(  \mathcal{I}\otimes
    \mathcal{L}\right)  (\rho)\Delta+\sum_{k=2}^{\infty}\left(  \mathcal{I}
    \otimes\mathcal{L}\right)  ^{k}(\rho)\frac{\Delta^{k}}{k!}.
\end{align}
The first-order term of~\eqref{eq:tmp-92} is equal to $\left(\mathcal{I}\otimes\mathcal{L}\right)(\rho)$, which was proved in Ref.~\cite{patel2023WaveMatrixLindbladization} and reviewed in \eqref{eq:1st-order-1}--\eqref{eq:1st-order-2}. The key idea for this proof is to verify that $\operatorname{Tr}_{23}[M(\rho\otimes\pi_L)M^\dagger]$ and $\operatorname{Tr}_{23}[M^\dagger M(\rho\otimes\pi_L)]$ are equivalent to $L\rho L^\dagger$ and $L^\dagger L \rho$, as mentioned previously.
Exploiting the triangle inequality for the trace norm, we find that for every $\rho_{01}\in\mathcal{D}(\mathcal{H}^{d}\otimes\mathcal{H}^d)$,
\begin{align}
    \left\Vert e^{\mathcal{L}\Delta}(\rho_{01})-\widetilde{e^{\mathcal{L}\Delta}}(\rho_{01})\right\Vert_1
    &=\left\Vert \widetilde{e^{\mathcal{L}\Delta}}(\rho_{01})- e^{\mathcal{L}\Delta}(\rho_{01})\right\Vert_1\\
    & = \left\Vert \widetilde{e^{\mathcal{L}\Delta}}(\rho_{01})-\rho_{01}-\Delta\mathcal{L}(\rho_{01})+\sum_{k=2}^\infty\mathcal{L}^k(\rho_{01})\frac{\Delta^k}{k!}\right\Vert_1 \label{tmp:b}\\
    & \leq    \left\Vert\widetilde{e^{\mathcal{L}\Delta}}(\rho_{01})-\rho_{01}-\Delta\mathcal{L}(\rho_{01})\right\Vert_1 + \left\Vert\sum_{k=2}^\infty \frac{\Delta^k}{k!}\mathcal{L}^k\right\Vert_\diamond \label{tmp:c} \\
    & \leq    \frac{\Delta^2d}{2} \left\|L\right\|_\infty^2 + \left\Vert\sum_{k=2}^\infty \frac{\Delta^k}{k!}\mathcal{L}^k\right\Vert_\diamond \label{tmp:c3} \\
    & \leq   \frac{\Delta^2d}{2} \left\|L\right\|_\infty^2 + \sum_{k=2}^\infty\left\Vert \mathcal{L}\right\Vert^k_\diamond\frac{\Delta^k}{k!} \label{tmp:c2}\\
    & \leq \frac{\Delta^2d}{2}\left\|L\right\|_\infty^2 + (e^{2\Delta\left\|L\right\|_\infty^2}-2\Delta\left\|L\right\|_\infty^2-1) \label{tmp:d} \\
    & \leq \frac{\Delta^2d}{2} \left\|L\right\|_\infty^2 + \frac{3}{4}\Delta^2\left\|L\right\|_\infty^2 \\
    & = \frac{\Delta^2(2d+3)}{4} \left\|L\right\|_\infty^2.
    \label{tmp:f}
\end{align}
Here, \eqref{tmp:b} follows from the definition of~\eqref{eq:WML-actual-one-step}, and \eqref{tmp:c} follows from previous calculation that $\operatorname{Tr}_{23}\mathcal{M}(\rho_{01}\otimes\pi_L)=(\mathcal{I}\otimes\mathcal{L})(\rho_{01})$. \eqref{tmp:c3} follows from Lemma~\ref{lem:main-upper-lemma} and the fact that $0\leq1+\frac{e^{-x}-1}{x}\leq \frac{x}{2}$ for $x\in[0, \infty)$. \eqref{tmp:c2} follows from subadditivity of the diamond norm~\cite[Appendix B]{go2025SamplebasedHamiltonianLindbladian}. \eqref{tmp:d} follows from $\|\mathcal{L}\|_\diamond\leq2\left\|L\right\|_\infty^2$~\cite[Appendix D]{go2025SamplebasedHamiltonianLindbladian}. Finally,~\eqref{tmp:f} is derived using the fact that
\begin{equation}
e^x-x-1\leq \frac{3}{4}x^2 \label{eq:e-to-x-quad-bound}    
\end{equation}
for all $x\in[0, 1]$.
\end{proof}

\begin{remark}
Eq.~\eqref{tmp:c3} and Lemma~\ref{lem:main-upper-lemma} represent the most important step in the proof above, which leads to our improvement over the prior bound from \cite{go2025SamplebasedHamiltonianLindbladian}.
\end{remark}

\begin{lemma}\label{lem:main-upper-lemma}
Let $d\in\mathbb{N}, d\geq2$ and let $\mathcal{D}_1[\cdot] \coloneq \frac{\mathbb{I}_d}{d}\operatorname{Tr}_1[\cdot]$ be the completely depolarizing channel acting on a $d$-dimensional quantum system labeled as $1$. Letting $\Delta\geq0$, there exists a linear map $\widetilde{\mathcal{L}}$ corresponding to the quantum channel $\widetilde{e^{\mathcal{L}\Delta}}$ (defined in~\eqref{eq:WML-actual-one-step}) such that
\begin{equation}
    \widetilde{e^{\mathcal{L}\Delta}}=\mathcal{I} + \Delta\left(\mathcal{I}-\mathcal{D}_1\right)\circ\widetilde{\mathcal{L}}
\end{equation}
with
\begin{equation}
    \left\|\widetilde{\mathcal{L}}-\mathcal{L}\right\|_\diamond\leq \frac{\Delta d}{4}\left\Vert L\right\Vert_\infty^2.
\end{equation}
\end{lemma}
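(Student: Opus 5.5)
The plan is to take $\widetilde{\mathcal{L}}$ to be the normalized difference quotient $\widetilde{\mathcal{L}}\coloneqq\Delta^{-1}\bigl(\widetilde{e^{\mathcal{L}\Delta}}-\mathcal{I}\bigr)$ and then check two things. First, the factorization through $\mathcal{I}-\mathcal{D}_1$ holds automatically. The map $\widetilde{e^{\mathcal{L}\Delta}}$ acts only on system $1$ (together with the discarded program $23$) and is trace preserving there. Hence $\operatorname{Tr}_1\circ\widetilde{\mathcal{L}}=0$ on $\mathcal{B}(\mathcal{H}_0\otimes\mathcal{H}_1)$, so $\mathcal{D}_1\circ\widetilde{\mathcal{L}}=0$ and $(\mathcal{I}-\mathcal{D}_1)\circ\widetilde{\mathcal{L}}=\widetilde{\mathcal{L}}$. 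If a sharper bound later requires it, I will instead replace $\widetilde{\mathcal{L}}$ by $\widetilde{\mathcal{L}}+\mathcal{D}_1\circ\mathcal{X}$ for some correction $\mathcal{X}$. This leaves the identity intact and gives freedom to subtract pieces that are proportional to $\mathbb{I}_d/d$ on system $1$; I expect this freedom to be where the depolarizing channel genuinely enters. The substantive content is therefore the estimate on $\|\widetilde{\mathcal{L}}-\mathcal{L}\|_\diamond$.

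For that estimate I will use the algebraic structure of $M$ in \eqref{eq:WML-M-jump} rather than a crude power series; a power series only gives $\|\mathcal{M}\|_\diamond^k$ and loses the factor $\|L\|_\infty^2$. Direct computation shows that $M^\dagger M=P\coloneqq|\Phi\rangle\!\langle\Phi|_{13}\otimes I_2$ and $MM^\dagger=Q\coloneqq I_1\otimes|\Phi\rangle\!\langle\Phi|_{23}$, with $|\Phi\rangle=|\Gamma\rangle/\sqrt d$, and both are projectors. Hence the no-jump semigroup is explicit: $K_s\coloneqq e^{-sP/2}=I-(1-e^{-s/2})P$. I will expand $e^{\mathcal{M}\Delta}$ by Duhamel's formula into
\begin{equation}
    e^{\mathcal{M}\Delta}(\omega)=K_\Delta\,\omega\,K_\Delta+\int_0^\Delta e^{\mathcal{M}(\Delta-s)}\bigl(M K_s\,\omega\,K_s M^\dagger\bigr)\,ds .
\end{equation}
Applying this to $\omega=\rho_{01}\otimes\pi_L$ and tracing out $23$ splits $\widetilde{e^{\mathcal{L}\Delta}}(\rho_{01})-\rho_{01}-\Delta\mathcal{L}(\rho_{01})$ into a no-jump part and a jump part.

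In the no-jump part, the identities \eqref{eq:1st-order-1}--\eqref{eq:1st-order-2} give $\operatorname{Tr}_{23}[P(\rho\otimes\pi_L)]=L^\dagger L\rho$. The remaining term $\operatorname{Tr}_{23}[P(\rho\otimes\pi_L)P]$ is a contraction through $\langle\Phi|_{13}$ and carries a factor $1/d$ times a quadratic expression in $L$. Each coefficient is a function of $\Delta$ such as $1-e^{-\Delta/2}$ or $(1-e^{-\Delta/2})^2$. Subtracting the exact first-order term leaves remainders controlled by the inequality $0\le 1+\frac{e^{-x}-1}{x}\le\frac{x}{2}$ already used in the proof of Lemma~\ref{lem:single-step-WML-bound}. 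For the jump part, $MK_s(\rho\otimes\pi_L)K_sM^\dagger=L\rho L^\dagger\otimes|\Phi\rangle\!\langle\Phi|_{23}$ up to $O(s)$ corrections. I will write $e^{\mathcal{M}(\Delta-s)}=\mathcal{I}+O(\Delta-s)$ and integrate over $s$: the $\mathcal{I}$ piece reproduces $\Delta\,L\rho L^\dagger$, and everything else is of order $\Delta^2$.

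The main obstacle is bounding these $\Delta^2$ remainders by $\frac{\Delta^2 d}{4}\|L\|_\infty^2$ instead of by an $L$-independent constant. The difficulty is the post-jump evolution. Once system $1$ carries $L\rho L^\dagger$ and $23$ sits in $|\Phi\rangle$, a further action of $\mathcal{M}$ no longer involves $L$ explicitly. The factor $\|L\|_\infty^2$ must then be recovered from the prefactor $\|L\rho L^\dagger\|_1\le\|L\|_\infty^2$. At the same time, the further action contributes $\operatorname{Tr}_{23}[\mathcal{M}(\sigma_1\otimes\Phi_{23})]$. A short computation should show this equals $d\cdot(\text{something of unit norm})$ minus a multiple of $\sigma_1$, or equivalently a term of the form $\mathbb{I}_d/d\otimes\operatorname{Tr}_1\sigma$ relative to $\sigma$; this is where I expect both the factor $d$ and the appearance of $\mathcal{D}_1$ to come from. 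I would first compute this post-jump generator exactly on product inputs $\sigma_{01}\otimes\Phi_{23}$ and identify its $\mathcal{D}_1$-component. I would then absorb that component using the $\mathcal{D}_1\circ\mathcal{X}$ freedom if necessary, and bound the rest in diamond norm using $\|\mathcal{M}\|_\diamond\le 2$ on the relevant subspace. Integrating $\int_0^\Delta(\Delta-s)\,ds=\Delta^2/2$ together with the no-jump remainder should then yield the constant $d/4$.
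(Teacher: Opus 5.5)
\textbf{Verdict: the plan cannot reach the stated constant, and in fact the inequality with $\frac{\Delta d}{4}$ is false; the paper's proof obtains it only through a sign error.}

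\textbf{What is right, and where the computation goes wrong.} Your reduction is correct. For $\Delta>0$ the admissible $\widetilde{\mathcal{L}}$ are exactly $\Delta^{-1}(\widetilde{e^{\mathcal{L}\Delta}}-\mathcal{I})+\mathcal{D}_1\circ\mathcal{X}$, so everything rests on the norm estimate, and that is where the plan breaks. First, your normalization is off by a factor $d$. In fact $M^\dagger M=|\Gamma\rangle\!\langle\Gamma|_{13}\otimes I_2=d\,P$ and $MM^\dagger=d\,Q$; this is the paper's relation $M^\dagger MM^\dagger=dM^\dagger$. Consequently $K_s=I-(1-e^{-sd/2})P$, $\operatorname{Tr}_{23}[P(\rho\otimes\pi_L)]=\frac1d L^\dagger L\rho$, $MK_s=e^{-sd/2}M$, and $\|\mathcal{M}\|_\diamond\le 2d$. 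Second, the factor $d$ does not come from the post-jump evolution. The state $\sigma\otimes\Phi_{23}$ is the WML input for the program $|\mathbb{I}/\sqrt d\rangle$, whose dissipator vanishes. Hence $\operatorname{Tr}_{23}\mathcal{M}(\sigma\otimes\Phi_{23})=0$, and the post-jump dynamics enter only at order $\Delta^3$. The $d$ comes instead from the pre-jump damping at rate $d/2$. Let $f(x)\coloneqq 1+\frac{e^{-x}-1}{x}$. The no-jump branch leaves $\Delta f(\tfrac{\Delta d}{2})\,\tfrac12\{L^\dagger L,\cdot\}$ beyond first order, plus a $\mathcal{D}_1$ term. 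The jump branch gives $\int_0^\Delta e^{-sd}\,ds\,L\rho L^\dagger$, i.e.\ a deficit $-\Delta f(\Delta d)\,L\rho L^\dagger$. Equivalently, from the paper's relations for $\mathcal{M}\mathcal{S}_0$, $\mathcal{M}\mathcal{S}_1$ and $\mathcal{M}\mathcal{S}_2$, every admissible choice satisfies
\[
\widetilde{\mathcal{L}}-\mathcal{L}=\frac{\Delta d}{2}\Bigl(\tfrac14\{L^\dagger L,\cdot\}-L\cdot L^\dagger\Bigr)+\mathcal{D}_1\circ\mathcal{X}+O(\Delta^2).
\]

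\textbf{Why no correction $\mathcal{X}$ can work: a counterexample.} No choice of $\mathcal{X}$ brings this down to $\frac{\Delta d}{4}\|L\|_\infty^2$, so your final step cannot succeed. Take $L=|v\rangle\!\langle u|$ with $u\perp v$, so that $\|L\|_2=\|L\|_\infty=1$, and input $|u\rangle\!\langle u|$. The output is $\frac{\Delta d}{4}|u\rangle\!\langle u|-\frac{\Delta d}{2}|v\rangle\!\langle v|+c\,\mathbb{I}$ up to $O(\Delta^2)$. Since $|\alpha+c|+|\beta+c|\ge|\alpha-\beta|$, its trace norm is at least $\frac{3\Delta d}{4}-O(\Delta^2)$ for every scalar $c$. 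For $d=2$ and $L=|1\rangle\!\langle0|$ this is exact. The paper's closed form gives $\widetilde{e^{\mathcal{L}\Delta}}(|0\rangle\!\langle0|)=(1-r)|0\rangle\!\langle0|+r|1\rangle\!\langle1|$ with $r=\frac23(1-e^{-3\Delta/2})$. Hence $\|\widetilde{\mathcal{L}}-\mathcal{L}\|_\diamond\ge2f(\tfrac{3\Delta}{2})\approx\frac{3\Delta}{2}$, against the claimed $\frac{\Delta}{2}$.

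\textbf{The error in the paper and the provable statement.} The paper's proof reaches $\frac{\Delta d}{4}$ only through a sign slip between \eqref{eq:tmp-pf-2} and \eqref{eq:tmp5}. There, $f(\tfrac{\Delta d}{2})-\frac{2}{d^2-2}f(\tfrac{\Delta d}{2})$ equals $+\frac{d^2-4}{d^2-2}f(\tfrac{\Delta d}{2})$, not $-\frac{d^2}{d^2-2}f(\tfrac{\Delta d}{2})$. Done correctly, the paper's transfer-matrix route yields $\|\widetilde{\mathcal{L}}-\mathcal{L}\|_\diamond\le\frac{d^2-4}{d^2-2}f(\tfrac{\Delta d}{2})+\frac{d^2}{d^2-2}f(\Delta d')\le\frac{3\Delta d}{4}\|L\|_\infty^2$. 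Your Duhamel route gives the same once the normalization is fixed. The example above attains the middle expression exactly. This corrected constant, $\frac{3\Delta d}{4}$, is what you can actually prove. It keeps the $O(d\|L\|_\infty^2)$ scaling of the downstream results, with larger constants.
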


\begin{proof}
    See Appendix~\ref{appendix:main-upper-proof}.
\end{proof}

\medskip

The main intuition behind the proof is that the WML jump operator $M$ in~\eqref{eq:WML-M-op} has a repetitive structure, i.e., $MM^\dagger M = dM$ and $\sqrt{d}M^2=M$. Thus, higher-order terms can be represented as quadratic polynomials of $M$ and $M^\dagger$. A direct consequence of Lemma~\ref{lem:main-upper-lemma} is the following linear approximation to $\mathcal{L}$:
\begin{align}
    \left\Vert \widetilde{e^{\mathcal{L}\Delta}} - \mathcal{I} - \mathcal{L}\Delta\right\Vert_\diamond &=\left\Vert
    \Delta\left(\mathcal{I}-\mathcal{D}_1\right)\circ\widetilde{\mathcal{L}} - \Delta 
     \mathcal{L}  + \Delta \mathcal{D}_1 \circ \mathcal{L} \right\Vert_\diamond \notag \\
    &=\Delta\left\Vert(\mathcal{I} - \mathcal{D}_1)\circ (\widetilde{\mathcal{L}} - \mathcal{L})\right\Vert_\diamond \notag \\
    &\leq\Delta\left\Vert\widetilde{\mathcal{L}} - \mathcal{L}\right\Vert_\diamond + \Delta\left\Vert\mathcal{D}_1\circ (\widetilde{\mathcal{L}} - \mathcal{L})\right\Vert_\diamond \notag \\
    &\leq 2\Delta\left\Vert\widetilde{\mathcal{L}} - \mathcal{L}\right\Vert_\diamond \notag \\
    &\leq \frac{\Delta^2d}{2}\left\Vert L\right\Vert_\infty^2.
\end{align}
To establish the first equality above, we used the fact that $\mathcal{D}_1 \circ \mathcal{L}=0$ because $\operatorname{Tr}[(\mathcal{D}_1 \circ \mathcal{L})(\rho)] = \operatorname{Tr}[\mathcal{L}(\rho)]=0$ for every state $\rho$ (i.e., in the latter we used that $\mathcal{D}_1$ is trace preserving and $\mathcal{L}$ is trace annihilating). In the following sections, we employ tools from random matrix theory to establish an upper bound on the typical sample complexity in Def.~\ref{def: approx sample complexity} that does not increase as the dimension grows.

\section{Typical-case scaling for random Lindbladian} \label{sec:average-case}
The sample complexity in Definition~\ref{def: sample complexity} requires the simulation error to be bounded from above by $\varepsilon$ for every possible quantum channel $e^{\mathcal{L}_{\pi}t}$ specified by $\pi$. Although such a worst-case guarantee is operationally strong, it may be overly restrictive in practice, because, in reality, physical noise models are rarely adversarial. In many situations, dissipative channels arise from complex environments whose effective descriptions resemble random operators with finite moments~\cite{rigol2008ThermalizationItsMechanism, bruzda2009RandomQuantumOperations, denisov2019UniversalSpectraRandom, kukulski2021GeneratingRandomQuantum, sa2020SpectralSteadystateProperties, chruscinski2025SpectralDelineationMarkov, moske2025RandomMatrixPerspective}. This motivates us to investigate typical behavior of the sample complexity when Lindbladian operators are drawn from random distributions. We therefore introduce a \emph{relaxed} notion of sample complexity for quantum simulation, which requires the simulation error to be bounded with high probability over the choice of program state. 
\begin{definition}[Typical sample complexity of sample-based quantum simulation]
\label{def: approx sample complexity}
The typical sample complexity of sample-based simulation $\overline{n}_d(t,\varepsilon,\delta)$ is defined as the minimum number of program states needed to realize a channel  $e^{\mathcal{L}_{\pi}t}$ within error $\varepsilon$, with a probability of at least $1-\delta$ over a uniformly random choice of the program state $\pi$:
\begin{equation}
    \overline{n}_d(t,\varepsilon,\delta) \coloneqq \inf_{\mathcal{P}^{(n)}\in\rm{CPTP}
    }\left\{  n\in\mathbb{N}:\Pr_{\pi\sim\mathcal{U}(\mathcal{H}_d)} \left[ \frac{1}{2}\left\Vert \mathcal{P}^{(n)}
    \circ\mathcal{A}_{\pi^{\otimes n}}-e^{\mathcal{L}_{\pi}t}\right\Vert
    _{\diamond}\leq\varepsilon \right] \geq 1-\delta \right\},
\end{equation}
where $\mathcal{U}(\mathcal{H}_d)$ denotes a uniform distribution on $\mathcal{D}(\mathcal{H})$.
\end{definition}

We note that the sample complexity in Def.~\ref{def: sample complexity} and the typical sample complexity in Def.~\ref{def: approx sample complexity} are related as
\begin{equation}
    n^{\ast}_d(t,\varepsilon)\geq \overline{n}_d(t,\varepsilon,\delta),
\end{equation}
where the equality holds by taking $\delta = 0$. Throughout this paper, we will omit $\delta$ if the implicit meaning is clear; i.e., we will abuse notation for typical sample complexity as $\overline{n}_d(t,\varepsilon)$ if typicality is implied in the context.

Following Defs.~\ref{def: sample complexity} and~\ref{def: approx sample complexity}, we consider (normalized) dynamics parameterized by a program state. Since the matrix elements of the program state are those of the Kossakowski matrix that fully describes Lindbladian dynamics (see Appendix~\ref{appendix:state2dyn} for details), we can consider a distance measure on program states, instead of that on normalized Lindbladian generators. This kind of randomness, based on the state-to-dynamics relation, is common when addressing random quantum channels~\cite{bruzda2009RandomQuantumOperations, nechita2018AlmostAllQuantum, denisov2019UniversalSpectraRandom, kukulski2021GeneratingRandomQuantum}. We therefore take the Haar-random measure as we focus on pure program states; i.e., $\ket{L}$ is a Haar-random pure state vector on $\mathbb{C}^d\otimes\mathbb{C}^d$. Equivalently, $|L\rangle$ may be generated by drawing a random vector from a rotation-invariant ensemble, commonly chosen to be a vector with independent complex Gaussian entries, followed by normalization~\cite{watrous2018TheoryQuantumInformation, zyczkowski2001InducedMeasuresSpace, kukulski2021GeneratingRandomQuantum}. Therefore, sampling $L$ from the Ginibre ensemble and normalizing it is equivalent to sampling the program state vector $\ket{L}$ uniformly and identifying the coefficients of $\ket{L}$ in the product basis with the matrix entries of $L$. For general dissipative Lindbladians, we can consider the induced measure on mixed states for program states~\cite{zyczkowski2001InducedMeasuresSpace}.

Theorem~\ref{thm:average-case-WML} establishes a high-probability upper bound on the sample complexity. Its proof relies on a concentration bound for the normalized operator norm of a Ginibre ensemble, as given in Proposition~\ref{prop:concentration}.

\begin{theorem}[Typical-case sample complexity of WML]
\label{thm:average-case-WML}
Let $d\ge 2$, and let $G \in \mathbb{C}^{d \times d}$ be a complex Ginibre matrix whose entries are independent complex Gaussian random variables with mean zero and unit variance, and define $L \coloneqq \frac{G}{\left\|G\right\|_2}$.
Then, for all $t \ge 0$, $\varepsilon \in (0,1)$, and $\delta \in (0,1)$, the typical case sample complexity given by Def.~\ref{def: approx sample complexity} satisfies
\begin{equation}
    \overline{n}_d(t,\varepsilon,\delta) \le
    7 \left( 1+\frac{\log(2/\delta)}{d} \right) \frac{t^2}{\varepsilon}.
\end{equation}
In particular, as the system dimension $d$ increases, the correction term $\log(2/\delta)/d$ vanishes asymptotically, and the sample complexity scales as
\begin{equation}
    \overline{n}_d(t,\varepsilon,\delta) = O\!\left(\frac{t^2}{\varepsilon}\right).
\end{equation}
\end{theorem}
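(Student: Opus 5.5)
The plan is to combine the deterministic bound of Theorem~\ref{cor:samp-comp-upper-bnd-WML} with a high-probability bound on $\|L\|_\infty^2 = \|G\|_\infty^2/\|G\|_2^2$. Since WML is a fixed CPTP protocol $\mathcal{P}^{(n)}$ that does not depend on $\pi$, Lemma~\ref{Thm: error bound WML} shows the following. On the event $E=\{\|L\|_\infty^2 \le c(1+\log(2/\delta)/d)/d\}$, choosing $n = \lceil \frac{2d+3}{8}\|L\|_\infty^2 t^2/\varepsilon\rceil$ (and also ensuring $n>2dt$) gives error at most $\varepsilon$. Hence $\overline{n}_d(t,\varepsilon,\delta)$ is bounded by $\frac{2d+3}{8}\cdot\frac{c}{d}(1+\log(2/\delta)/d)\,t^2/\varepsilon \le \frac{7c}{16}(\ldots)$, using $(2d+3)/d\le 7/2$ for $d\ge2$. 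It then remains to show $\Pr[E^c]\le\delta$ with a constant $c$ of about $16$, which produces the prefactor $7$.

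For the concentration step (Proposition~\ref{prop:concentration}), I would bound the numerator and denominator separately, each failing with probability at most $\delta/2$; this split is where the $2/\delta$ comes from. For the numerator, $\|G\|_\infty$ is a $1$-Lipschitz function of the $2d^2$ real Gaussian coordinates, each of variance $1/2$. Gaussian concentration then gives $\Pr[\|G\|_\infty \ge \mathbb{E}\|G\|_\infty + s]\le e^{-s^2}$. A Gordon/Slepian comparison gives $\mathbb{E}\|G\|_\infty\le 2\sqrt d$. With $s=\sqrt{\log(2/\delta)}$ this yields $\|G\|_\infty^2\le (2\sqrt d+\sqrt{\log(2/\delta)})^2\le 8d(1+\log(2/\delta)/d)$. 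For the denominator, $\|G\|_2^2$ is a sum of $d^2$ independent $\mathrm{Exp}(1)$ variables, i.e.\ Gamma$(d^2,1)$. A lower-tail Chernoff/Laurent--Massart bound gives $\|G\|_2^2\ge d^2 - 2d\sqrt{\log(2/\delta)}$, which is at least $d^2/2$ once $\log(2/\delta)\le d^2/16$.

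The main obstacle is tracking constants uniformly in $\delta$ so that a single formula with prefactor $7$ holds. If $\delta$ is tiny enough that the denominator bound degenerates, I would instead use the trivial bound $\|L\|_\infty^2\le 1$. In that regime $\log(2/\delta)/d\gtrsim d$, so the claimed right-hand side already exceeds the worst-case bound $\frac{2d+3}{8}t^2/\varepsilon$ from Theorem~\ref{cor:samp-comp-upper-bnd-WML}. Splitting into these two regimes and choosing the thresholds carefully should give the stated constant, possibly after slightly loosening the intermediate estimates. The final claim that $\overline{n}_d = O(t^2/\varepsilon)$ for fixed $\delta$ as $d\to\infty$ is then immediate.
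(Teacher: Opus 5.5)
Your route is the same as the paper's. You apply Theorem~\ref{cor:samp-comp-upper-bnd-WML} on a high-probability event for $\|L\|_\infty^2=\|G\|_\infty^2/\|G\|_2^2$. That event comes from a union bound spending $\delta/2$ on the upper tail of $\|G\|_\infty$ (Gaussian concentration about $2\sqrt d$) and $\delta/2$ on the lower tail of $\|G\|_2^2$ (Laurent--Massart). You then finish with $(2d+3)/d\le 7/2$. Your constants give $\|L\|_\infty^2\le\frac{16}{d}\bigl(1+\log(2/\delta)/d\bigr)$, which matches Proposition~\ref{prop:concentration} up to the coefficient of the $\log(2/\delta)/d^2$ term.

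Your separate treatment of $\log(2/\delta)>d^2/16$ via $\|L\|_\infty\le\|L\|_2=1$ is more complete than the paper. Its proof of Proposition~\ref{prop:concentration} begins ``Suppose first that $\eta\le 1/2$'' and never returns to $\eta>1/2$. That case is trivial for exactly your reason: then $8\log(2/\delta)/d^2>1\ge\|L\|_\infty^2$.

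The step that does not go through as written is the parenthetical ``and also ensuring $n>2dt$''. If you really impose that hypothesis of Lemma~\ref{Thm: error bound WML}, your count is at least $2dt$. That grows linearly in $d$ whenever $t/\varepsilon\lesssim d$, so the ``Hence'' giving a $d$-independent bound does not follow.

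The missing observation is that the hypothesis is stronger than what the proof uses. Lemma~\ref{lem:main-upper-lemma} holds for every $\Delta\ge0$. The size of $\Delta$ enters Lemma~\ref{lem:single-step-WML-bound} only through \eqref{eq:e-to-x-quad-bound} with $x=2\Delta\|L\|_\infty^2$, so $\Delta\le 1/2$ suffices. Take the $\pi$-independent choice $n=\lceil 7(1+\log(2/\delta)/d)\,t^2/\varepsilon\rceil$; this is legitimate because the bound of Lemma~\ref{Thm: error bound WML} decreases in $n$. Then $\varepsilon<1$ forces $n\ge 2t$: if $t\le 1/2$ then $n\ge 1\ge 2t$, and otherwise $n>7t^2>2t$. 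This choice also removes the dependence of your stated $n$ on $\|L\|_\infty$, which the definition of $\overline{n}_d$ does not allow. The paper sidesteps the issue only by quoting Theorem~\ref{cor:samp-comp-upper-bnd-WML}, which silently drops the condition.

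A smaller point, shared with the paper: you take $\mathbb{E}\|G\|_\infty\le 2\sqrt d$ from the real Gordon bound. For complex entries the Sudakov--Fernique increment condition does not carry over verbatim; it fails for pairs related by $(u,v)\mapsto(e^{i\theta}u,e^{-i\theta}v)$. So either cite a complex version, or write $G=(A+iB)/\sqrt2$ with independent real $N(0,1)$ matrices $A,B$. The latter gives $2\sqrt{2d}$ and only changes the constant $7$ to $14$.
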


\begin{proof}
By Proposition~\ref{prop:concentration}, the Frobenius-normalized Ginibre operator $L=G/\left\|G\right\|_2$ satisfies, with probability at least $1-\delta$,
\begin{equation}
    \left\|L\right\|_\infty^2 = \frac{\left\|G\right\|_\infty^2}{\left\|G\right\|_2^2}
    \le \frac{16}{d} + \frac{8\log(2/\delta)}{d^2}.
\end{equation}
Conditioned on this event, we apply the WML sample-complexity upper bound in Theorem~\ref{cor:samp-comp-upper-bnd-WML}. 
For a fixed Lindblad operator $L$, the number of WML samples sufficient to achieve diamond-distance error at most $\varepsilon$ is bounded by
\begin{equation}
    n_d(t,\varepsilon) \le \frac{2d+3}{8}
    \left\|L\right\|_\infty^2 \frac{t^2}{\varepsilon}.
\end{equation}
Substituting the high-probability bound on $\left\|L\right\|_\infty^2$ gives
\begin{align}
    n_d(t,\varepsilon)
    &\le \frac{2d+3}{8} \left(\frac{16}{d} + \frac{8\log(2/\delta)}{d^2}\right)\frac{t^2}{\varepsilon} \\
    & = \left[4+\frac{6}{d}+\left(\frac{2}{d}+\frac{3}{d^2}\right)\log(2/\delta)\right]\frac{t^2}{\varepsilon}.
\end{align}
For $d\ge 2$, this is bounded from above by
\begin{equation}
    n_d(t,\varepsilon)\le 7\left(1+\frac{\log(2/\delta)}{d}\right)\frac{t^2}{\varepsilon}.
\end{equation}
By Def.~\ref{def: approx sample complexity}, this implies
\begin{equation}
    \overline{n}_d(t,\varepsilon,\delta)\le 7\left(1+\frac{\log(2/\delta)}{d}\right)\frac{t^2}{\varepsilon}.
\end{equation}
By noting that the correction term $\log(2/\delta)/d$ vanishes asymptotically as $d$ grows, we obtain
\begin{equation}
    \overline{n}_d(t,\varepsilon,\delta) = O\!\left(\frac{t^2}{\varepsilon}\right),
\end{equation}
which completes the proof.
\end{proof}

\begin{proposition}[Concentration of the normalized operator norm]
\label{prop:concentration}
Let $d\ge 2$, and let $G \in \mathbb{C}^{d \times d}$ be a complex Ginibre matrix whose entries are independent complex Gaussian random variables with mean zero and unit variance, and define $L \coloneqq \frac{G}{\left\|G\right\|_2}$.
Then, for all $\delta \in (0,1)$, with probability at least $1-\delta$,
\begin{equation}
    \left\|L\right\|_\infty^2 \le \frac{16}{d} + \frac{8\log(2/\delta)}{d^2}.
\end{equation}
\end{proposition}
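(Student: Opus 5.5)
We bound the numerator $\|G\|_\infty^2$ from above and the denominator $\|G\|_2^2$ from below, each on an event of probability at least $1-\delta/2$. A union bound then gives the claimed ratio bound with probability at least $1-\delta$.

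\textbf{Numerator.} We use Gaussian concentration for the operator norm. Write $G = X + iY$ with $X,Y$ having i.i.d.\ real Gaussian entries of variance $1/2$. The map $(X,Y)\mapsto \|G\|_\infty$ is $1$-Lipschitz with respect to the Euclidean (Frobenius) norm on the $2d^2$ real coordinates, because $\|G\|_\infty\le\|G\|_2$. Since the coordinates have variance $1/2$, the Gaussian concentration inequality gives
\begin{equation}
\Pr\!\left[\|G\|_\infty \ge \mathbb{E}\|G\|_\infty + s\right]\le e^{-s^2}.
\end{equation}
For the mean, a standard Gordon/Slepian-type estimate (or a net argument) gives $\mathbb{E}\|G\|_\infty \le 2\sqrt{d}$. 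Taking $s=\sqrt{\log(2/\delta)}$ and using $(a+b)^2\le 2a^2+2b^2$, we get, with probability at least $1-\delta/2$,
\begin{equation}
\|G\|_\infty^2 \le 8d + 2\log(2/\delta).
\end{equation}

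\textbf{Denominator.} Here $\|G\|_2^2 = \sum_{ij}|G_{ij}|^2$ is a sum of $d^2$ i.i.d.\ $\mathrm{Exp}(1)$ variables, i.e.\ a Gamma$(d^2,1)$ variable. A lower-tail Chernoff bound gives
\begin{equation}
\Pr\!\left[\|G\|_2^2\le d^2/2\right]\le e^{-c d^2}
\end{equation}
with $c=\log 2-\tfrac12\approx 0.19$. This is at most $\delta/2$ whenever $d^2\ge \log(2/\delta)/c$.

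\textbf{Combining.} On the intersection of the two events, the ratio is at most
\begin{equation}
\frac{8d+2\log(2/\delta)}{d^2/2} = \frac{16}{d}+\frac{4\log(2/\delta)}{d^2},
\end{equation}
which is at least as strong as the stated bound.

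\textbf{Main obstacle: very small $\delta$.} When $\delta$ is so small that $\log(2/\delta)\gtrsim d^2$, the denominator tail can no longer be made at most $\delta/2$ by the Chernoff bound above. In that regime I would use the trivial bound $\|L\|_\infty\le\|L\|_2=1$ instead. This needs the right-hand side to be at least $1$, i.e.\ it suffices that $8\log(2/\delta)/d^2\ge 1$, which holds when $\log(2/\delta)\ge d^2/8$. So the cases split as follows.
\begin{itemize}
\item If $\log(2/\delta)\ge d^2/8$, the claim is trivially true, since the right-hand side is at least $1\ge\|L\|_\infty^2$.
\item Otherwise $\log(2/\delta)< d^2/8$, and I need $e^{-cd^2}\le\delta/2$, i.e.\ $c\ge 1/8$. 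This holds with $c\approx 0.19$.
\end{itemize}
Using the slack between the constants $4$ and $8$ in front of $\log(2/\delta)/d^2$, the two regimes close consistently.

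The remaining delicate points are the exact constant in $\mathbb{E}\|G\|_\infty\le 2\sqrt d$ for complex Ginibre matrices and the variance normalization in the Gaussian concentration exponent. If either turns out slightly worse, the slack in $16/d$ versus $2\cdot 8$ and in the factor $8$ versus $4$ should absorb it. Failing that, I would move the threshold $d^2/2$ for $\|G\|_2^2$ to $d^2/\sqrt2$ and rebalance the two events.
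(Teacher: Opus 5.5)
Your argument is correct and follows the same skeleton as the paper's proof. Both bound $\|G\|_\infty$ above by Gaussian concentration about $2\sqrt d$, bound $\|G\|_2^2=\sum_{ij}|G_{ij}|^2$ below by a lower-tail estimate, take a union bound with $\delta/2$ per event, and use $(a+b)^2\le 2a^2+2b^2$. The differences are in the details, and they favour you.

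The paper bounds the denominator at the $\delta$-dependent level $(1-\eta)d^2$, with $\eta=\sqrt{2\log(2/\delta)}/d$, via Laurent--Massart. It then says ``suppose first that $\eta\le 1/2$'' and never returns to $\eta>1/2$. You instead use the fixed level $d^2/2$ with the Gamma Chernoff bound $e^{-(\log 2-1/2)d^2}$. You then dispose of the complementary regime $\log(2/\delta)\ge d^2/8$, which is exactly $\eta>1/2$, via $\|L\|_\infty\le\|L\|_2=1$. So your write-up is complete where the paper leaves that trivial case implicit.

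You also use the correct exponent $e^{-s^2}$ for a $1$-Lipschitz function of real coordinates with variance $1/2$. The paper uses the weaker $e^{-u^2/2}$. This is why you obtain $4\log(2/\delta)/d^2$ rather than $8\log(2/\delta)/d^2$.

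One caution about your closing remarks. In your computation the coefficient $16/d$ is attained exactly, since $2\cdot 4d/(d^2/2)=16/d$, so there is no slack there to absorb a worse constant. The proof therefore genuinely needs $\mathbb{E}\|G\|_\infty\le 2\sqrt d$ for the \emph{complex} Ginibre ensemble. The paper uses the same input without comment, citing Vershynin's bound for real Gaussian matrices, so you should have a reference for the complex case. Your fallback of raising the threshold to $d^2/\sqrt2$ would also fail as stated. The Chernoff exponent becomes $1/\sqrt2-1+\tfrac12\log 2\approx 0.054<1/8$, so the case split at $\log(2/\delta)=d^2/8$ no longer closes.
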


\begin{proof}
    We estimate the normalized operator norm of the Frobenius-normalized Ginibre operator $L=G/\left\|G\right\|_2$.
    Since $\left\|L\right\|_\infty^2 =\left\|G\right\|_\infty^2/\left\|G\right\|_2^2$, it suffices to control the operator norm of $G$ from above and the Frobenius norm of $G$ from below.

    For the numerator, we use the one-sided Gaussian concentration inequality for the operator norm of a complex Ginibre matrix. 
    Specifically, for every $u\ge0$,
    \begin{equation}
        \mathbb{P}\!\left( \left\|G\right\|_\infty > 2\sqrt d+u \right)
        \le \exp\!\left(-\frac{u^2}{2}\right)   ,
    \end{equation}
    which follows from the Gaussian concentration argument underlying Ref.~\cite[Corollary~7.3.2]{vershynin2018HighDimensionalProbabilityIntroduction}.

    For the denominator, we use that $\left\|G\right\|_2^2=\sum_{i,j}|G_{ij}|^2$. Since $G_{ij}\sim\mathcal{CN}(0,1)$, the random variables $|G_{ij}|^2$ are independent exponential random variables with mean one. Hence $2\left\|G\right\|_2^2$ is $\chi$-squared with $2d^2$ degrees of freedom. By the Laurent--Massart $\chi$-squared inequality~\cite[Lemma~1, Eq.~(4.4)]{laurent2000AdaptiveEstimationQuadratic}, for every $x>0$,
    \begin{equation}
        \mathbb{P}\!\left(\left\|G\right\|_2^2 < d^2-d\sqrt{2x}\right)\le e^{-x}.
    \end{equation}
    Equivalently, for $\eta\in(0,1)$,
    \begin{equation}
        \mathbb{P}\!\left(\left\|G\right\|_2^2 < (1-\eta)d^2\right)\le \exp\!\left(-\frac{\eta^2d^2}{2}\right).
    \end{equation}
    Now let us define
    \begin{equation}
        X(u,\eta)\coloneqq\frac{(2\sqrt d+u)^2}{(1-\eta)d^2}.
    \end{equation}
    By the preceding two tail bounds and the union bound, for any $u\ge0$ and $\eta\in(0,1)$,
    \begin{align}
        \mathbb{P}\!\left(\frac{\left\|G\right\|_\infty^2}{\left\|G\right\|_2^2} > X(u,\eta)\right)
        &\leq \mathbb{P}\!\left(\left\|G\right\|_\infty > 2\sqrt d+u\right) 
            +\mathbb{P}\!\left(\left\|G\right\|_2^2 < (1-\eta)d^2\right)\nonumber\\
        &\le \exp\!\left(-\frac{u^2}{2}\right)+\exp\!\left(-\frac{\eta^2 d^2}{2}\right).
    \end{align}
    For a given $\delta\in(0,1)$, we choose
    \begin{equation}
        u = \sqrt{2\log(2/\delta)}, \qquad
        \eta = \frac{\sqrt{2\log(2/\delta)}}{d}.
    \end{equation}
    Each exponential term is then bounded by $\delta/2$, which leads to 
    \begin{equation}
        \mathbb{P}\!\left(\frac{\left\|G\right\|_\infty^2}{\left\|G\right\|_2^2} > X(u,\eta)\right) \le \delta.
    \end{equation}
    Suppose first that $\eta\le 1/2$. 
    Then $(1-\eta)^{-1}\le2$, and therefore
    \begin{align}    
        X(u,\eta) &\le \frac{2\left(2\sqrt d+\sqrt{2\log(2/\delta)}\right)^2}{d^2}\\
        &\le \frac{2\left(8d+4\log(2/\delta)\right)}{d^2}\\
        &= \frac{16}{d}+ \frac{8\log(2/\delta)}{d^2},
    \end{align}
    where we used $(a+b)^2\le 2a^2+2b^2$ in the second inequality.
    Consequently, with probability at least $1-\delta$,
    \begin{equation}
        \left\|L\right\|_\infty^2 = \frac{\left\|G\right\|_\infty^2}{\left\|G\right\|_2^2}
        \le \frac{16}{d} + \frac{8\log(2/\delta)}{d^2}.
    \end{equation}
    This completes the proof.
\end{proof}

\medskip

A similar dimension-independent typical-case scaling also holds for more general subgaussian ensembles, up to constants depending only on the subgaussian norm bound. 
We include this extension in Appendix~\ref{app:subgaussian-average-case}.

\section{Worst-case scaling under adversarial Lindbladians}
\label{sec:worst-case}
Lemma~\ref{Thm: error bound WML} gives a diamond-distance error bound for WML whose leading prefactor is proportional to $d\left\|L\right\|_\infty^2$. In Section~\ref{sec:average-case}, we showed that this dimension dependence is canceled with high probability for Frobenius-normalized Ginibre Lindblad operators, since $\left\|L\right\|_\infty^2=O(1/d)$. It is therefore natural to investigate the worst-case behavior of WML, where the operator-norm contribution is of constant order. A canonical way to realize this regime is to consider low-rank Lindblad operators because $\left\|L\right\|_\infty^2=\Theta(1)$ for a Frobenius-normalized rank-one operator $L$. 

We establish the following dimension-dependent lower bound on the WML sample complexity for this rank-one Lindblad operator, which exhibits linear scaling with the dimension.
\begin{theorem}[Lower bound for WML on an adversarial Lindbladian]
\label{prop:rankone_lower}
Let us define the sample complexity of the WML algorithm as
\begin{equation}
    n_{\mathrm{WML}}(t,\varepsilon) \coloneqq 
    \inf\left\{n\in\mathbb N: \sup_{\mathcal{L}\in\mathcal{B}(\mathcal{H}_d)}\frac{1}{2}\left\|
    \widetilde{e^{\mathcal{L} t}} - e^{\mathcal{L} t}\right\|_\diamond
    \le \varepsilon
    \right\},
\end{equation}
for $d\geq 2$ and all $t,\varepsilon>0$, 
Then, for sufficiently small $\varepsilon$, the lower bound of the sample complexity of WML is given as
\begin{equation}
n_{\mathrm{WML}}(t,\varepsilon)
\ge
\frac{d}{32}\,
\frac{t^2}{\varepsilon},
\label{eq:rankone_lower_rzero}
\end{equation}
which can be obtained from an adversarial Lindbladian $\mathcal{L}$ with a rank-one jump operator $L\coloneqq |u\rangle\langle u|$, where $|u\rangle\in\mathbb C^d$ and $\left\| |u\rangle \right\|_2=1$.
\end{theorem}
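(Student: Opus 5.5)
The plan is to reduce the diamond distance to one matrix element of the output that can be computed exactly. We take $L=|u\rangle\!\langle u|$, which satisfies $\|L\|_2=\|L\|_\infty=1$. For this operator we compute the WML single step $\widetilde{e^{\mathcal{L}\Delta}}$ in closed form. We then show that on a coherence $|u\rangle\!\langle v|$ with $v\perp u$ the WML map contracts by a factor that differs from the exact one at order $d\Delta^2$. These per-step deviations all have the same sign, so they add up over $n$ steps to an error of order $d t^2/n$. Setting this error $\le\varepsilon$ then forces $n\gtrsim d t^2/\varepsilon$.

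\textbf{Step 1 (closed form of one WML step).} We use the algebraic relations of the WML jump operator noted after Lemma~\ref{lem:main-upper-lemma}, namely $MM^\dagger M=dM$ and $\sqrt d M^2=M$. Together with $M^\dagger M=(\mathrm{SWAP}_{12}\otimes I_3)(I_1\otimes|\Gamma\rangle\!\langle\Gamma|_{23})(\mathrm{SWAP}_{12}\otimes I_3)$, these show that $M^\dagger M/d$ is a projector $P$. The non-Hermitian part $e^{-\frac12 M^\dagger M\Delta}$ therefore acts as $I-(1-e^{-d\Delta/2})P$. We then organize $e^{\mathcal{M}\Delta}(\rho\otimes\pi_L)$ as a Dyson series in the number of jumps, i.e., as a sum over trajectories. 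Since each jump maps into the span of $\ket{\Gamma}_{23}$, only terms that are geometric in $e^{-d\Delta}$ survive. This should give an exact expression of the form
\begin{equation}
\widetilde{e^{\mathcal{L}\Delta}}(\rho)=\rho+\frac{1-e^{-d\Delta}}{d}\bigl(\mathcal{L}(\rho)+\mathcal{R}(\rho)\bigr),
\end{equation}
where $\mathcal{R}$ consists of terms that are killed by, or confined to, the depolarized part $\mathcal{D}_1$. This matches the structure $\mathcal{I}+\Delta(\mathcal{I}-\mathcal{D}_1)\circ\widetilde{\mathcal{L}}$ of Lemma~\ref{lem:main-upper-lemma}. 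Rather than computing $\mathcal{R}$ in full, I would track only the matrix element $\langle u|\cdot|v\rangle$.

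\textbf{Step 2 (coherence recursion).} Choose the input $\rho=\tfrac12(|u\rangle+|v\rangle)(\langle u|+\langle v|)$, with no reference system needed. Under the exact dynamics $c=\langle u|\rho|v\rangle$ evolves as $c\mapsto e^{-\Delta/2}c$, because $L\rho L^\dagger$ has no $uv$ component and $L^\dagger L=|u\rangle\!\langle u|$. Under WML, the jump branch $L\rho L^\dagger$ and any contributions from the completely depolarized branch are proportional to $|u\rangle\!\langle u|$ or $\mathbb{I}$. Neither has a $uv$ entry, so only the no-jump branch contributes. This yields a scalar recursion
\begin{equation}
c\mapsto\Bigl(1-\tfrac{1-e^{-d\Delta}}{2d}\Bigr)c=\bigl(1-\tfrac{\Delta}{2}+\tfrac{d\Delta^2}{4}+O(d^2\Delta^3)\bigr)c.
\end{equation}
I expect this step to be the main obstacle. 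One must verify carefully that no multi-jump or depolarizing term feeds back into the $uv$ coherence, and pin down the exact coefficient $\tfrac{1-e^{-d\Delta}}{2d}$. Concretely, one checks that $\operatorname{Tr}_{23}$ of the relevant no-jump sector equals $(1-(1-e^{-d\Delta/2})\cdot(\text{weight on }P))$ applied to $|u\rangle$ and $|v\rangle$ separately.

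\textbf{Step 3 (accumulation and conclusion).} After $n$ steps the gap between the WML coherence and the exact one is
\begin{equation}
\tfrac12\bigl[(1-\tfrac{1-e^{-d\Delta}}{2d})^n-e^{-t/2}\bigr]\ge \tfrac12 e^{-t/2}\,n\bigl(\tfrac{d\Delta^2}{4}-O(d^2\Delta^3)\bigr).
\end{equation}
This follows from $a^n-b^n\ge n b^{n-1}(a-b)$ with $a=1-\tfrac{1-e^{-d\Delta}}{2d}$ and $b=e^{-\Delta/2}$, together with $b^{n-1}\ge e^{-t/2}$. It uses $a-b=\tfrac{d\Delta^2}{4}-\tfrac{\Delta^2}{8}+O(d^2\Delta^3)\ge \tfrac{d\Delta^2}{8}$ for $d\ge2$ and small $d\Delta$. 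The trace norm dominates twice the modulus of any off-diagonal entry. Hence the diamond distance is at least of order $e^{-t/2}d t^2/(16 n)$. If this is to be $\le\varepsilon$, we get $n\ge c\,d t^2/\varepsilon$. The hypothesis ``sufficiently small $\varepsilon$'' guarantees that $n$ is large, so that $d\Delta=dt/n\ll1$ and the cubic remainder is negligible. The $e^{-t/2}$ factor is absorbed for bounded $t$, which is the regime where the stated constant $1/32$ is obtained. If a $t$-uniform statement is intended, I would instead apply the bound on a time slice of length $O(1)$ and use monotonicity of the WML error in $n$.
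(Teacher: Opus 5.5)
There is a genuine gap, at two levels. First, the contraction factor in Step~2 is wrong. Take your own no-jump propagator $e^{-\frac12 M^\dagger M\Delta}=I-(1-e^{-d\Delta/2})P$ with $P=\frac1d|\Gamma\rangle\!\langle\Gamma|_{13}\otimes I_2$. It leaves the bra $\langle v|\langle L|$ untouched. Indeed $P|v\rangle|L\rangle=0$ and $M|v\rangle|L\rangle=\frac{1}{\sqrt d}(L|v\rangle)\otimes|\Gamma\rangle=0$, so every jump trajectory drops out, as you anticipated. On the ket side, however, $\operatorname{Tr}_{23}\bigl[P\,|u\rangle|L\rangle\langle v|\langle L|\bigr]=\frac1d|u\rangle\!\langle v|$. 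Hence the coherence evolves as $c\mapsto\bigl(1-\frac{1-e^{-d\Delta/2}}{d}\bigr)c=\bigl(1-\frac{\Delta}{2}+\frac{d\Delta^2}{8}+O(d^2\Delta^3)\bigr)c$. This is exactly $1+\frac12(e^{T\Delta})_{02}$ in the paper's closed form. Your Step~1 ansatz, with a single prefactor $\frac{1-e^{-d\Delta}}{d}$ in front of $\mathcal{L}$, is not correct: the exact one-step map involves both $e^{-d\Delta/2}$ and $e^{-(d-1/d)\Delta}$. Consequently $a-b=\frac{(d-1)\Delta^2}{8}+O(d^2\Delta^3)$, not $\ge\frac{d\Delta^2}{8}$. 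Your witness then gives at best $n\gtrsim\frac{(d-1)e^{-t/2}}{16}\frac{t^2}{\varepsilon}$. For $d=2$ this lies strictly below the claimed $\frac{d}{32}\frac{t^2}{\varepsilon}$ for every $t>0$, so even your bounded-$t$ fallback fails there.

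Second, and more fundamentally, any coherence witness is damped by $e^{-t/2}$ under both evolutions, so it cannot give a bound that is uniform in $t$. The suggested time-slice repair does not work. The error at the final time does not control the error at an intermediate time, and the coherence cannot be re-prepared mid-evolution. Even granting such control, a slice of length $t_0$ with $m=nt_0/t$ steps yields only $n\gtrsim\frac{(d-1)t_0e^{-t_0/2}}{16}\frac{t}{\varepsilon}$, which is linear rather than quadratic in $t$.

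The missing idea is a witness that is \emph{stationary} under the exact dynamics, so that the per-step WML deviations accumulate without damping. The paper takes $\rho=|u\rangle\!\langle u|$, for which $\mathcal{L}(\rho)=0$. Through the depolarizing term $\mathcal{D}_1[L\rho L^\dagger]$, the WML iterates stay in $\operatorname{span}\{|u\rangle\!\langle u|,\mathbb{I}\}$. The weight on $\mathbb{I}$ is $z_n=\kappa\frac{\lambda^n-1}{\lambda-1}$ with $\kappa=\frac{\Delta^2}{4}+O(\Delta^3)$ and $\lambda=1-\frac{d-1}{4}\Delta^2+O(\Delta^3)$. Hence $z_n=\frac{t^2}{4n}+O(n^{-2})$, with no $e^{-t/2}$ factor. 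Restricting to the $(d-1)$-dimensional complement of $|u\rangle$ gives $\frac12\|\cdot\|_1\ge\frac{d-1}{2}z_n$, and thus the $t$-uniform bound $\frac{d}{32}\frac{t^2}{\varepsilon}$. Your input state actually contains this leakage too: its $\frac12|u\rangle\!\langle u|$ component leaks into the directions orthogonal to $|u\rangle$. You discard it by tracking only $\langle u|\cdot|v\rangle$.
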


\begin{proof}
See Appendix~\ref{app:worst_prop_proof}.
\end{proof}

\section{Implications and applications} \label{sec:implication}
In this section, we explain protection against tomography of program states in sample-based simulation, related to what has been discussed previously in the context of density matrix exponentiation~\cite{kimmel2017HamiltonianSimulationOptimal} and WML \cite{patel2023WaveMatrixLindbladization, patel2023WaveMatrixLindbladizationa}. As an alternative method to sample-based Lindbladian simulation, quantum state tomography can be performed on i.i.d.~copies of program states, in order to obtain an estimate $\widetilde{\pi}$ of $\pi$, or equivalently, an estimate $\widetilde L$ of $L$. We recall that the dimension of the program state $\pi$ is at most $d^2$ for $d$-dimensional Lindbladian dynamics (see Appendix~\ref{appendix:state2dyn}). Therefore, the sample complexity of program state tomography should be lower bounded by $\Omega(d^2r)$~\cite{haah2016SampleoptimalTomographyQuantum, haah2017SampleoptimalTomographyQuantum, yuen2023ImprovedSampleComplexity}. More specifically, Ref.~\cite[Eq. (2.84)]{patel2023WaveMatrixLindbladizationa} proved that 
\begin{equation}
    n=\Omega\left(\frac{d^2(t-\varepsilon)^2}{\varepsilon^2\log(d^2t/\varepsilon)}\right)
\end{equation}
samples are necessary to achieve $\frac{1}{2}\left\Vert e^{\mathcal{L}t} - e^{\widetilde{\mathcal{L}}t}\right\Vert_\diamond\leq O(\varepsilon)$, where $e^{\widetilde{\mathcal{L}}t}$ is a channel that approximates the dynamics based on the jump operator $\tilde L$ estimated from program state tomography. However, the sample complexity of sample-based Lindbladian simulation is typically upper bounded by $O(t^2/\varepsilon)$, and $O(dt^2/\varepsilon)$ even for the worst case. Therefore, for a sufficiently large $d$, this strict gap between the Lindbladian-simulation upper bound and the program-state-tomography lower bound provides a form of quantum privacy~\cite{aaronson2009QuantumCopyProtectionQuantum, patel2023WaveMatrixLindbladizationa}. This clear distinction between the sample complexities of simulation and tomography was not evident prior to this work, since the previously best-known sample complexity upper bound for Lindbladian simulation, $O(d^2t^2/\varepsilon)$~\cite{go2025SamplebasedHamiltonianLindbladian}, scales quadratically with the dimension.

\section{Concluding remarks} \label{sec:conclusion}
We established improved upper bounds on the sample complexity of sample-based Lindbladian simulation based on Wave Matrix Lindbladization (WML). Our main contribution was to improve the dimension dependence of the sample complexity, yielding the upper bound $n^\ast_d(t, \varepsilon) = O(d \| L\|_\infty^2 t^2/\varepsilon)$. Our result tightens the gap between previously known upper and lower bounds while preserving the optimal $O(t^2/\varepsilon)$ scaling inferred from a sample-based simulation framework. We further demonstrated that the dimension dependence of the sample complexity vanishes in the typical case for random Lindblad operators. In this case, we obtain the sample complexity $O(t^2/\varepsilon)$ that does not increase by the dimension, with high probability $1-\delta$. This provides rigorous evidence that realistic noise models are typically less complex than worst-case adversarial instances and possess structural features that enable WML to achieve substantially lower sample complexity. On the other hand, we also identified an explicit case in which linear scaling of the sample complexity is unavoidable for WML, thereby revealing a strict gap between the worst- and typical-case complexity of the algorithm. Our findings strengthen the theoretical foundation of sample-based open-system simulation by providing a rigorous analysis of sample complexity and opening a new research direction toward exploring the complexity dichotomy between typical and adversarial Lindbladian dynamics.

Several directions remain open. First, it would be valuable to extend the present analysis to general Lindbladians with multiple dissipative terms and coherent Hamiltonian contributions. Second, it is natural to ask whether our newly obtained upper bounds, in the typical and worst case, match the corresponding sample-complexity lower bound. Currently available analysis on the lower bound is restricted to $d=2$~\cite[Theorem~5]{go2025SamplebasedHamiltonianLindbladian}.
We conjecture that the remaining $O(d)$ multiplicative gap between the WML-based upper bound and the algorithm-independent lower bound is irreducible. The intuition is that, as the operator norm of the Lindblad operator decreases, different dissipative channels become harder to distinguish, while fewer samples are required to achieve the same level of channel error. Since sample-complexity lower bounds are closely related to channel distinguishability~\cite{wei2024SimulatingNoncompletelyPositive, go2025SamplebasedHamiltonianLindbladian}, this tradeoff suggests that the $O(d)$ gap may be intrinsic. Finally, further investigation of experimentally motivated noise models may help connect these complexity-theoretic results to practical simulation strategies for near-term quantum devices, including continuous-variable quantum systems.

\begin{acknowledgments}
This work was supported by the National Research Foundation of Korea (NRF) Grants (No.~RS-2024-00413957, No.~RS-2024-00438415). HK is supported by the KIAS Individual Grant No.~CG085302 at Korea Institute for Advanced Study. DP and MMW acknowledge support from the Air Force Office of Scientific Research under agreement no.~FA2386-24-1-4069.    
    
The U.S.~Government is authorized to reproduce and distribute reprints for Governmental purposes notwithstanding any copyright notation thereon. The views and conclusions contained herein are those of the authors and should not be interpreted as necessarily representing the official policies or endorsements, either expressed or implied, of the United States Air Force.

\end{acknowledgments}

\section*{Author contributions}

\noindent \textbf{Author Contributions}:
The following describes the different contributions of the authors of this work, using roles defined by the CRediT
(Contributor Roles Taxonomy) project~\cite{NISO}:

\medskip 
\noindent \textbf{SP}: FFormal analysis, Validation, Writing - Original draft, Writing—Review \& Editing.

\medskip 
\noindent \textbf{YS}: Formal analysis, Validation, Writing - Original draft, Writing—Review \& Editing.

\medskip 
\noindent \textbf{BG}: Validation, Writing - Review \& Editing.

\medskip 
\noindent \textbf{DP}: Validation, Writing - Review \& Editing.

\medskip 
\noindent \textbf{MMW}: Funding acquisition,  Validation,  Writing - Review \& Editing.

\medskip 
\noindent \textbf{HK}: Formal analysis, Methodology, Funding acquisition, Writing - Review \& Editing, Supervision.

\bibliography{references}

\appendix

\section{State to dynamics correspondence in sample-based Lindbladian simulation}

\label{appendix:state2dyn}

In this section, we explain why program state for dissipative Lindbladian dynamics needs to be $d^2$ dimensional quantum state. The general GKSL form is composed of coherent and dissipative parts~\cite{gorini1976CompletelyPositiveDynamical, lindblad1976GeneratorsQuantumDynamical},
\begin{equation}
    \mathcal{L} = \mathcal{L}_U + \mathcal{L}_D,
\end{equation}
where the coherent part is
\begin{equation}
    \mathcal{L}_U[\rho] = -i[H, \rho],
\end{equation}
and the dissipative part is
\begin{equation}
    \mathcal{L}_D[\rho] = \sum_{n,m=1}^{d^2-1}K_{mn}\left[F_n\rho F_m^\dagger - \frac{1}{2}\{F_m^\dagger F_n, \rho\}\right].
\end{equation}
Here, $H\in\mathfrak{su}(d)$, $\{F_n\}_{n=1}^{d^2-1}$ is an arbitrary (Hermitian) Hilbert-Schmidt basis on $\mathfrak{su}(d)$ such that $\operatorname{Tr}(F_nF_m^\dagger)=\delta_{mn}$, and $K$ is a positive semi-definite matrix known as the ``Kossakowski matrix," or the GKS matrix. $\mathcal{L}$ becomes Lindbladian if and only if there exists a positive semi-definite matrix $K$ and a Hermitian matrix $H$~\cite{sweke2015UniversalSimulationMarkovian, childs2023EfficientSimulationSparse, denisov2019UniversalSpectraRandom}. We can partially integrate the coherent parts with the dissipative parts in a unified form as
\begin{equation}
    \mathcal{L}[\rho]\coloneq-i[H',\rho]+\sum_{m,n=0}^{d^2-1} \Pi_{mn} \left[F_n\rho F_m^\dagger-\frac{1}{2}\left\{F_m^\dagger F_n, \rho\right\}\right]
\end{equation}
with a coefficient matrix
\begin{equation}
    \Pi=\begin{bmatrix}
        \Pi_{00} & -ih^T \\
        ih & K
    \end{bmatrix}\succeq0.
\end{equation}
Here, $F_0$ commutes with any $F_n$, e.g., $F_0=\tfrac{\mathbb{I}_d}{\sqrt{d}}$, and $h$ is a $(d^2-1)$-dimensional real vector. The positive semi-definiteness condition implies that $\Pi_{00}\geq h^\dagger K^{-1}h$. This suggests that there exists a quantum \emph{program} state
\begin{equation}\label{eq:progdef}
    \pi\coloneq\frac{1}{\operatorname{Tr}\Pi}\sum_{m,n=0}^{d^2-1}{\Pi_{mn}}\ket{F_n}\bra{F_m}\in\mathcal{D}(\mathcal{H}_{d\times d})
\end{equation}
with complete information of a dissipative Lindblad generator up to normalization. This motivates to sample-based Lindbladian simulation, as sample-based Hamiltonian simulation~\cite{lloyd2014QuantumPrincipalComponent, kimmel2017HamiltonianSimulationOptimal, wei2024SimulatingNoncompletelyPositive, go2025SamplebasedHamiltonianLindbladian} utilize $\sigma\coloneq \frac{H-\sigma_0 \mathbb{I}}{\operatorname{Tr}[H-\sigma_0 \mathbb{I}]}$ as a program state. Non-scalar residual coherent term, $H'\coloneq H-\sum_mh_mF_mF_0$, should be considered when $\Pi_{00}$ cannot be bounded, e.g., with separate sample-based Hamiltonian simulation and Trotterization techniques~\cite{patel2023WaveMatrixLindbladizationa}. Otherwise, we can set $H'=0$.

By diagonalizing $\Pi$, we recover a dissipative Lindbladian as linear combination of \emph{single} dissipative terms $\mathcal{L}^{(1)}$, where 
\begin{equation}
    \label{eq:lindblad_dynamics_app}
    \mathcal{L}^{(1)}_k(\rho) \coloneqq L_k \rho L_k^\dagger - \frac{1}{2} \left\{L_k^\dagger L_k, \rho \right\}, 
\end{equation}
and $\Pi_{mn}=\sum_k\gamma_ku^\ast_{mk}u_{nk}$, $L_k=\sum_m u_{mk}F_m$.
The corresponding program states then become an ensemble of pure states
\begin{equation}
    \pi_{L_k}=\ket{L_k}\bra{L_k}
\end{equation}
with distribution $\{\gamma_k\}$. If a sample-based simulation of a single dissipator is achievable, then simulation of general Lindbladian dynamics is also possible, e.g., via linear combinations of unitary operations (LCU)~\cite[Algorithm 1]{patel2023WaveMatrixLindbladizationa}. Although the extension to general Lindbladian simulation is also feasible via direct preparation of the mixed state $\pi = \sum_k \gamma_k \pi_{L_k}$, simulating coherent and dissipative terms simultaneously may be inefficient because $\Pi_{00}$ may diverge to maintain the positive semi-definite property of the program state. From the above discussion, we focus on a single dissipative Lindbladian, without loss of generality.

\section{Proof of Lemma~\ref{lem:main-upper-lemma}} \label{appendix:main-upper-proof}
As mentioned in the main text, higher-order terms in $(M, M^\dagger)$ appearing in the exponential map can be reduced to obtain an analytic representation of the WML Lindbladian. Formally, the operator $M$ in~\eqref{eq:WML-M-jump} satisfies the following relations:
\begin{align}
M^2 & =\frac{1}{\sqrt{d}}M,\\
M^\dagger MM^\dagger & = dM^\dagger,
\end{align}
which end up eliminating third orders of $M, M^\dagger$. 

Let us define the following five linear maps acting on an arbitrary quantum state $\rho$:
\begin{align}
    \mathcal{S}_0(\rho) &\coloneqq  \rho,\\
    \mathcal{S}_1(\rho) &\coloneqq  M\rho M^\dagger,\\
    \mathcal{S}_2(\rho) &\coloneqq  \frac{1}{2}\{M^\dagger M, \rho\},\\
    \mathcal{S}_3(\rho) &\coloneqq  \frac{\sqrt{d}}{2}\left(M^\dagger M\rho M^\dagger + M\rho M^\dagger M\right),\\
    \mathcal{S}_4(\rho) &\coloneqq  \frac{1}{d} M^\dagger M\rho M^\dagger M.
\end{align}
The coefficients of these maps have been selected by normalization, i.e., $\left\|\operatorname{Tr}_{23}\mathcal{S}_{i}(~\cdot~\otimes\pi_L)\right\|_\diamond=1$ for all $i$ and for normalized $L$, following from~\cite{go2025SamplebasedHamiltonianLindbladian}. One can immediately figure out that
\begin{align}
\mathcal{S}_0 & =\mathcal{I},\\
\mathcal{M} & =\mathcal{S}_1-\mathcal{S}_2,
\end{align}
from~\eqref{eq:WML-M-op}, and also that 
\begin{align}
\operatorname{Tr}_{23}\mathcal{S}_{0}(\rho\otimes\pi_L) & =\rho,
\\
\operatorname{Tr}_{23}\mathcal{S}_{1}(\rho\otimes\pi_L) & =L\rho L^\dagger, \\
\operatorname{Tr}_{23}\mathcal{S}_{2}(\rho\otimes\pi_L) & =\tfrac{1}{2}\{L^\dagger L, \rho\},
\end{align}
from~\cite{patel2023WaveMatrixLindbladization}. Crucially, $\mathcal{M}$ acting on a linear combination of the $\mathcal{S}_i$ superoperators produces  a linear combination of them, as follows:
\begin{align}
    \mathcal{M}\mathcal{S}_0 &= \mathcal{S}_1-\mathcal{S}_2,\\
    \mathcal{M}\mathcal{S}_1 &= \frac{1}{d}\mathcal{S}_1-\frac{1}{d}\mathcal{S}_3,\\
    \mathcal{M}\mathcal{S}_2 &= d\mathcal{S}_1-\frac{d}{2}\mathcal{S}_2-\frac{d}{2}\mathcal{S}_4,\\
    \mathcal{M}\mathcal{S}_3 &= d\mathcal{S}_1-\frac{d}{2}\mathcal{S}_3-\frac{d}{2}\mathcal{S}_4,\\
    \mathcal{M}\mathcal{S}_4 &= d\mathcal{S}_1-d\mathcal{S}_4.
\end{align}
We can simplify the relations above by embedding them in a transfer matrix $T$, such that
\begin{equation}
\mathcal{M}\mathcal{S}_{k} = \sum_j T_{kj} \,\mathcal{S}_{j}  ,
\end{equation}
and the matrix elements of $T$ are as follows:
\begin{equation}
    T
    =
    \begin{bmatrix}
        0 & 1 & -1 & 0 & 0 \\
        0 & \frac{1}{d} & 0 & -\frac{1}{d} & 0 \\
        0 & d & -\frac{d}{2} & 0 & -\frac{d}{2} \\
        0 & d & 0 & -\frac{d}{2} & -\frac{d}{2} \\
        0 & d & 0 & 0 & -d
    \end{bmatrix}.
\end{equation}
Due to linearity, computing the exponentiation of $T$ suffices for representing the exponential map $e^{\mathcal{M}\Delta}$ acting on the $\mathcal{S}_i$ superoperators, in particular, on the identity map $\mathcal{S}_0$:
\begin{equation}
    e^{\mathcal{M}\Delta} = \sum_j (e^{T\Delta})_{0j}\,\mathcal{S}_j \, .
\end{equation}
The remaining task is to exponentiate the five-dimensional matrix $T$ using its eigendecomposition. Here, we skip the detailed process and write the relevant result as follows:
\begin{align}
    (e^{T\Delta})_{00} &= 1, \\
    (e^{T\Delta})_{01} &=-\frac{e^{-\Delta d'}-1}{d'}, \\
    (e^{T\Delta})_{02} &=\frac{e^{-\Delta d/2}-1}{d/2}, \\
    (e^{T\Delta})_{03} &=\frac{2}{d^2-2}\left((e^{T\Delta})_{01}+(e^{T\Delta})_{02}\right), \\
    (e^{T\Delta})_{04} &=-\frac{d^2}{d^2-2}\left((e^{T\Delta})_{01}+(e^{T\Delta})_{02}\right),
\end{align}
where we define $d'\coloneqq d-\tfrac{1}{d}$. The analytic expression for $\operatorname{Tr}_{23}[e^{\mathcal{M}\Delta}(\rho\otimes\pi_L)]$ then follows directly as,
\begin{align}
    \widetilde{e^{\mathcal{L}\Delta}}(\rho_{01})
    &=\rho_{01}  +(e^{T\Delta})_{02}\frac{1}{2}\left\{L^\dagger L, \rho_{01}\right\} \label{eq:tmp1} \\
    &\qquad +\left(\frac{2}{d^2-2}(e^{T\Delta})_{02}+\frac{d^2}{d^2-2}(e^{T\Delta})_{01}\right) L\rho_{01} L^\dagger \label{eq:tmp2} \\
    &\qquad - \frac{d^2}{d^2-2}\left((e^{T\Delta})_{02} + (e^{T\Delta})_{01}\right) \mathcal{D}_1[L\rho_{01}L^\dagger] ,\label{eq:tmp3}
\end{align}
where $\mathcal{D}_1=\frac{\mathbb{I}}{d}\operatorname{Tr}_1$ is the completely depolarizing channel. The coefficient in~\eqref{eq:tmp2} follows from the fact that $\operatorname{Tr}_{23}[\mathcal{S}_3(\rho\otimes\pi_L)]=L\rho L^\dagger$, and the mapping of~\eqref{eq:tmp3} follows from the identity $\operatorname{Tr}_{23}[\mathcal{S}_4(\rho\otimes\pi_L)]=\tfrac{\mathbb{I}}{d}\operatorname{Tr}_1[L\rho L^\dagger]$. Substituting coefficients gives
\begin{align}
    \widetilde{e^{\mathcal{L}\Delta}}(\rho_{01})
    &=\rho_{01}+\Delta\mathcal{L}(\rho_{01}) \notag \\
    &\qquad +\Delta\left(1+\frac{e^{-\Delta d/2}-1}{\Delta d/2}\right)\frac{1}{2}\{L^\dagger L, \rho_{01} \} \notag \\
    &\qquad +\Delta\left(\frac{2}{d^2-2}\left[1+\frac{e^{-\Delta d/2}-1}{\Delta d/2}\right]-\frac{d^2}{d^2-2}\left[1+\frac{e^{-\Delta d'}-1}{\Delta d'}\right]\right) L\rho_{01} L^\dagger \notag \\
    &\qquad +\Delta \frac{d^2}{d^2-2}\left(1+\frac{e^{-\Delta d'}-1}{\Delta d'}-1-\frac{e^{-\Delta d/2}-1}{\Delta d/2}\right) \mathcal{D}_1[L\rho_{01}L^\dagger].
\end{align}
We note that $\mathcal{D}_1$ acting on the terms in~\eqref{eq:tmp1} and~\eqref{eq:tmp2} is exactly~\eqref{eq:tmp3} with the opposite sign, from the cyclic property of trace: $2\mathcal{D}_1[L\rho L^\dagger] = \mathcal{D}_1[\{L^\dagger L,\rho\}]$. Therefore, letting $\widetilde{\mathcal{L}}$ be the linear map such that
\begin{align}
    \widetilde{\mathcal{L}}[\rho_{01}]&= \mathcal{L}[\rho_{01}] + \left(1+\frac{e^{-\Delta d/2}-1}{\Delta d/2}\right)\frac{1}{2}\{L^\dagger L, \rho_{01} \} \notag \\
    &\qquad+\left(\frac{2}{d^2-2}\left[1+\frac{e^{-\Delta d/2}-1}{\Delta d/2}\right]-\frac{d^2}{d^2-2}\left[1+\frac{e^{-\Delta d'}-1}{\Delta d'}\right]\right) L\rho_{01} L^\dagger,
\end{align}
we obtain
\begin{equation}
    \widetilde{e^{\mathcal{L}\Delta}}=\mathcal{I} + \Delta(\mathcal{I}-\mathcal{D}_1)\widetilde{\mathcal{L}},
\end{equation}
where we used the fact that $\mathcal{D}_1 \circ \mathcal{L} = 0$.
This confirms the first part of Lemma~\ref{lem:main-upper-lemma}.

The second part follows from the fact that $0\leq f(x)\coloneqq 1+\frac{e^{-x}-1}{x}\leq\tfrac{x}{2}$ which monotonically increases for all $ x\in[0, \infty)$:
\begin{align}
    & \left\Vert\widetilde{\mathcal{L}}-\mathcal{L}\right\Vert_\diamond \notag \\
    &\leq \left\vert1+\frac{e^{-\Delta d/2}-1}{\Delta d/2}\right\vert\left\Vert \frac{1}{2}\{L^\dagger L, \cdot \} \right\Vert_\diamond \notag \\ 
    &\qquad + \left\vert\frac{d^2}{d^2-2}\left[1+\frac{e^{-\Delta d'}-1}{\Delta d'}\right] - \frac{2}{d^2-2}\left[1+\frac{e^{-\Delta d/2}-1}{\Delta d/2}\right]\right\vert \left\Vert L\cdot L^\dagger\right\Vert_\diamond \label{eq:tmp4} \\
    &\leq \left\vert1+\frac{e^{-\Delta d/2}-1}{\Delta d/2}\right\vert\left\Vert L^\dagger L\right\Vert_\infty \notag \\ 
    &\qquad + \left\vert\frac{d^2}{d^2-2}\left[1+\frac{e^{-\Delta d'}-1}{\Delta d'}\right] - \frac{2}{d^2-2}\left[1+\frac{e^{-\Delta d/2}-1}{\Delta d/2}\right]\right\vert \left\Vert L^\dagger L\right\Vert_\infty \label{eq:tmp-pf-1} \\
    &= \left(1+\frac{e^{-\Delta d/2}-1}{\Delta d/2} + \frac{d^2}{d^2-2}\left[1+\frac{e^{-\Delta d'}-1}{\Delta d'}\right] - \frac{2}{d^2-2}\left[1+\frac{e^{-\Delta d/2}-1}{\Delta d/2}\right] \right)\left\Vert L\right\Vert_\infty^2 \label{eq:tmp-pf-2}\\
    &= \frac{d^2}{d^2-2}\left(\left[1+\frac{e^{-\Delta d'}-1}{\Delta d'}\right] - \left[1+\frac{e^{-\Delta d/2}-1}{\Delta d/2}\right]\right)\left\Vert L^\dagger L\right\Vert_\infty \label{eq:tmp5} \\
    &\leq\frac{d^2}{d^2-2}\left(\frac{\Delta d'}{2}-\frac{\Delta d}{4}\right) \left\Vert L^\dagger L\right\Vert_\infty \label{eq:tmp6} \\
    &=\frac{\Delta d}{4}\left\Vert L\right\Vert_\infty^2.
\end{align}
\eqref{eq:tmp4} follows from the triangular inequality, \eqref{eq:tmp-pf-1} follows from derivations in~\cite{go2025SamplebasedHamiltonianLindbladian}, \eqref{eq:tmp-pf-1} follows because $f(x)$ is monotonically increasing, and finally, \eqref{eq:tmp6} follows because $f(x)$ is Lipschitz continuous with constant $\frac{1}{2}$, given that $f'(0)=1/2$ and $f'(x)$ is monotonically decreasing for $x\geq 0$.

\section{Typical-case bound for subgaussian Lindblad operators}
\label{app:subgaussian-average-case}
For completeness, we provide a subgaussian extension of the typical-case bound. Here, $\|\cdot\|_{\psi_2}$ denotes the standard subgaussian norm, i.e.,
\begin{equation}
    \|X\|_{\psi_2} :=
\inf\left\{
t>0 :
\mathbb{E}\!\left[e^{X^2/t^2}\right]\le 2
\right\},
\end{equation} and for a complex random variable $X$, we use $\|X\|_{\psi_2}\coloneqq \||X|\|_{\psi_2}$.
\begin{theorem}[Subgaussian typical-case WML bound]
\label{thm:subgaussian-average-case-WML}
Let $d\ge2$, and let $L'\in\mathbb C^{d\times d}$ be a random matrix whose entries are independent, mean-zero, unit-variance complex subgaussian random variables satisfying $\|L'_{ij}\|_{\psi_2}\le K$, and define $L = L' / \left\|L'\right\|_2$.
Then there exists a constant $C_K>0$, depending only on the subgaussian norm bound $K$, such that for any $t\ge0$, $\varepsilon\in(0,1)$, and $\delta\in(0,1)$, the typical case sample complexity given by Def.~\ref{def: approx sample complexity} satisfies
\begin{equation}
    \overline{n}_d(t,\varepsilon,\delta) \le C_K \frac{t^2}{\varepsilon} \left(1 + \frac{\log(1/\delta)}{d}\right).
\end{equation}

In particular, for fixed $K$ and $\delta$, as the system dimension $d$ increases, the correction term $\log(1/\delta)/d$ vanishes asymptotically, and the sample complexity scales as
\begin{equation}
\overline{n}_d(t,\varepsilon,\delta) = O\!\left(\frac{t^2}{\varepsilon}\right).
\end{equation}
\end{theorem}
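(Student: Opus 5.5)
The plan mirrors the proof of Theorem~\ref{thm:average-case-WML}. First I establish a subgaussian analogue of Proposition~\ref{prop:concentration}, namely that with probability at least $1-\delta$,
\begin{equation}
    \left\|L\right\|_\infty^2=\frac{\left\|L'\right\|_\infty^2}{\left\|L'\right\|_2^2}\le \frac{c_K}{d}\left(1+\frac{\log(1/\delta)}{d}\right)
\end{equation}
for some constant $c_K$ depending only on $K$. I then plug this into Theorem~\ref{cor:samp-comp-upper-bnd-WML}. Since $\frac{2d+3}{8}\le \frac{7d}{16}$ for $d\ge2$, this gives
\begin{equation}
    \overline{n}_d(t,\varepsilon,\delta)\le \tfrac{7}{16}\,c_K\left(1+\tfrac{\log(1/\delta)}{d}\right)\frac{t^2}{\varepsilon},
\end{equation}
so one may take $C_K=\tfrac{7}{16}c_K$. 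The asymptotic statement then follows immediately.

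\textbf{Numerator.} I would bound $\|L'\|_\infty$ using the standard operator-norm bound for matrices with independent, mean-zero subgaussian entries \cite[Theorem~4.4.5]{vershynin2018HighDimensionalProbabilityIntroduction}. This is an $\varepsilon$-net argument over unit vectors, combined with Hoeffding-type subgaussian tail bounds for the bilinear forms $\langle x, L' y\rangle$. It yields
\begin{equation}
    \mathbb{P}\!\left(\|L'\|_\infty > C K(2\sqrt d+u)\right)\le 2e^{-u^2}.
\end{equation}
The complex case can be handled either by applying the real result to the real and imaginary parts and using the triangle inequality, or by working with the $2d\times 2d$ real embedding. Either route changes only the constant. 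Choosing $u=\sqrt{\log(4/\delta)}$ and using $(a+b)^2\le 2a^2+2b^2$ gives, with probability at least $1-\delta/2$,
\begin{equation}
    \|L'\|_\infty^2\le C'K^2\bigl(d+\log(1/\delta)\bigr).
\end{equation}
Here $\log(4/\delta)\lesssim 1+\log(1/\delta)$, and the constant term is absorbed into $d$.

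\textbf{Denominator.} This is the step I expect to be the main obstacle, because the Laurent--Massart $\chi^2$ bound used for the Gaussian case is no longer available. Instead, I write $\|L'\|_2^2=\sum_{ij}|L'_{ij}|^2$. This is a sum of $d^2$ independent nonnegative variables with mean $1$. Each centered summand $|L'_{ij}|^2-1$ is subexponential with $\psi_1$-norm at most $CK^2$. Bernstein's inequality then gives
\begin{equation}
    \mathbb{P}\!\left(\|L'\|_2^2<d^2/2\right)\le \exp\!\left(-c\,d^2\min\!\left(\tfrac{1}{K^4},\tfrac{1}{K^2}\right)\right).
\end{equation}
The difficulty is that this is not automatically at most $\delta/2$ for arbitrary $\delta$. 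When $\delta$ is super-exponentially small in $d^2$, I would therefore not rely on the lower tail at all. In that regime $\log(1/\delta)\gtrsim d^2/K^4$, so $\frac{\log(1/\delta)}{d^2}$ is at least a constant, and the trivial bound $\|L\|_\infty^2\le\|L\|_2^2=1$ already lies below $c_K\frac{1}{d}\bigl(1+\frac{\log(1/\delta)}{d}\bigr)$ once $c_K\gtrsim K^4$. In the complementary regime, the Bernstein lower tail is at most $\delta/2$. The union bound then gives $\|L\|_\infty^2\le 2C'K^2(d+\log(1/\delta))/d^2$ with probability at least $1-\delta$. Since $K\ge$ an absolute constant follows from unit variance, I can take $c_K=C''\max(K^2,K^4)$, which completes the proof.
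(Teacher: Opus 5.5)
Your proposal is correct and follows essentially the same route as the paper: an upper tail for $\|L'\|_\infty$ from Vershynin's subgaussian matrix-norm bound, a lower tail for $\|L'\|_2^2$ from Bernstein (which the paper phrases via Hanson--Wright), a union bound, and substitution into Theorem~\ref{cor:samp-comp-upper-bnd-WML} with $\frac{2d+3}{8}=O(d)$. Your version is slightly more complete: the paper takes $\eta=C_2K^2\sqrt{\log(4/\delta)}/d$ and uses $(1-\eta)^{-1}\le 2$ without treating $\eta>1/2$ (i.e., $\delta$ super-exponentially small in $d^2$), whereas your fixed $\eta=1/2$ plus the deterministic bound $\|L\|_\infty^2\le\|L\|_2^2=1$ covers that regime, at the price of $C_K$ scaling like $K^4$.
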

The proof of Theorem~\ref{thm:subgaussian-average-case-WML} relies on the following concentration bound for the normalized operator norm of a random Lindblad operator.

\begin{proposition}[Subgaussian concentration of the normalized operator norm]
\label{prop:subgaussian-concentration}
Let $d\ge2$, and let $L' \in \mathbb{C}^{d \times d}$ be a random matrix whose entries are independent, mean-zero, unit-variance complex subgaussian random variables satisfying $\|L'_{ij}\|_{\psi_2}\le K$, and define $L = L' / \left\|L'\right\|_2$. 
Then there exists a constant $C_K>0$, depending only on the subgaussian norm bound $K$, such that for any $\delta\in(0,1)$, with probability at least $1-\delta$,
\begin{equation}
\left\|L\right\|_\infty^2 \le C_K \left( \frac{1}{d} + \frac{\log(1/\delta)}{d^2} \right).
\end{equation}
\end{proposition}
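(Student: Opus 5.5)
The argument mirrors the proof of Proposition~\ref{prop:concentration}. Since $\left\|L\right\|_\infty^2=\left\|L'\right\|_\infty^2/\left\|L'\right\|_2^2$, it suffices to bound the numerator from above and the denominator from below, each with failure probability $\delta/2$, and then combine the two events by a union bound. The Gaussian-specific tools are replaced by their subgaussian analogues.

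\emph{Numerator.} I will invoke the standard operator-norm bound for matrices with independent, mean-zero subgaussian entries \cite[Theorem~4.4.5]{vershynin2018HighDimensionalProbabilityIntroduction}. For complex entries, first split into real and imaginary parts. Each part has independent mean-zero entries with $\psi_2$-norm at most $K$, so the triangle inequality applies. This gives, for every $u\ge0$, with probability at least $1-2e^{-u^2}$,
\[
\left\|L'\right\|_\infty\le 2C K\left(2\sqrt d+u\right).
\]
Taking $u=\sqrt{\log(4/\delta)}$ yields $\left\|L'\right\|_\infty^2\le C_1K^2\bigl(d+\log(4/\delta)\bigr)$ with probability at least $1-\delta/2$.

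\emph{Denominator.} We have $\left\|L'\right\|_2^2=\sum_{i,j}|L'_{ij}|^2$, a sum of $d^2$ independent variables with mean one. The centered variables $|L'_{ij}|^2-1$ are subexponential with $\psi_1$-norm at most $C(K^2+1)$. Bernstein's inequality then gives
\[
\mathbb P\!\left(\left\|L'\right\|_2^2<d^2/2\right)\le 2\exp\!\left(-c\min\!\left\{\frac{d^2}{4K'^4},\frac{d^2}{2K'^2}\right\}\right)=2e^{-c_Kd^2}.
\]
This lower tail is the step I expect to need the most care. The bound above fails with probability at most $\delta/2$ only when $d^2\gtrsim\log(1/\delta)/c_K$, so it does not hold uniformly in $\delta$.

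I would handle this by a case split. In the complementary regime $c_Kd^2<\log(4/\delta)$, we simply use the trivial deterministic bound $\left\|L\right\|_\infty^2\le\left\|L\right\|_2^2=1$. In that regime
\[
1< \frac{\log(4/\delta)}{c_K d^2},
\]
so the claimed inequality holds automatically once $C_K\ge 1/c_K$ (up to converting $\log(4/\delta)$ to $\log(1/\delta)$).

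In the main regime, the union bound gives, with probability at least $1-\delta$,
\[
\left\|L\right\|_\infty^2\le \frac{2C_1K^2\bigl(d+\log(4/\delta)\bigr)}{d^2}.
\]

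Finally, I absorb $\log(4/\delta)\le\log 4+\log(1/\delta)$ into constants. The $\log 4$ term contributes at most $2\log 4/d^2\le \log 4/d$, using $d\ge2$. Hence $\left\|L\right\|_\infty^2\le C_K\bigl(1/d+\log(1/\delta)/d^2\bigr)$, with $C_K$ depending only on $K$.
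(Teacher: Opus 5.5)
Your proposal is correct and has the same skeleton as the paper's proof. You write $\|L\|_\infty^2=\|L'\|_\infty^2/\|L'\|_2^2$ and bound the numerator with Vershynin's operator-norm theorem for subgaussian matrices. You bound the denominator from below with a subexponential concentration inequality (Bernstein for $\sum_{ij}|L'_{ij}|^2$, where the paper uses Hanson--Wright on $\langle L'|L'\rangle$; with the identity as quadratic form these coincide). Then you take a union bound.

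The real difference is in the denominator. The paper chooses the $\delta$-dependent deviation $\eta=C_2K^2\sqrt{\log(4/\delta)}/d$ and then uses $(1-\eta)^{-1}\le 2$. That silently requires $\eta\le 1/2$, which fails once $\log(4/\delta)\gtrsim d^2/K^4$; for $\eta\ge 1$ the threshold $X(u,\eta)$ is not even a meaningful bound. The same unstated restriction appears in the Gaussian Proposition~\ref{prop:concentration}, which only treats ``first'' the case $\eta\le 1/2$ and never returns to the other case.

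Your fixed choice $\eta=1/2$, together with the case split that uses the deterministic bound $\|L\|_\infty^2\le\|L\|_2^2=1$ when $c_Kd^2<\log(4/\delta)$, covers exactly the regime the paper omits. It costs nothing, because the paper throws away any gain from small $\eta$ by bounding $(1-\eta)^{-1}$ by $2$ anyway. You are also more careful about complex entries, which Vershynin's theorem does not directly cover.

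There is one small slip. Applying the theorem separately to $\mathrm{Re}\,L'$ and $\mathrm{Im}\,L'$ and taking a union bound gives failure probability $4e^{-u^2}$, not $2e^{-u^2}$. You should therefore take $u=\sqrt{\log(8/\delta)}$, which only changes constants. Your Bernstein exponent also mixes $K'$ and $K'^2$ as the $\psi_1$ parameter, but the conclusion $2e^{-c_Kd^2}$ is unaffected.
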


\begin{proof}[Proof of Theorem~\ref{thm:subgaussian-average-case-WML}]
By Theorem~\ref{cor:samp-comp-upper-bnd-WML}, for each fixed Lindblad operator $L$, it suffices to take a number of WML samples bounded by
\begin{equation}
    n_d(t,\varepsilon)
    \le
    \frac{2d+3}{8}
    \left\|L\right\|_\infty^2
    \frac{t^2}{\varepsilon}
    \le
    d\left\|L\right\|_\infty^2
    \frac{t^2}{\varepsilon}.
\end{equation}
From Proposition~\ref{prop:subgaussian-concentration}, we have that with probability at least $1-\delta$,
\begin{equation}
\frac{\left\|L'\right\|_\infty^2}{\left\|L'\right\|_2^2} \le
C_K \left( \frac{1}{d} + \frac{\log(1/\delta)}{d^2} \right).
\end{equation}
Substituting this bound into the preceding WML sample bound gives, with probability at least $1-\delta$,
\begin{equation}
n_d(t,\varepsilon) \le C_K \frac{t^2}{\varepsilon}
\left( 1 + \frac{\log(1/\delta)}{d} \right).
\end{equation}

By Def.~\ref{def: approx sample complexity}, this implies that the typical sample complexity satisfies
\begin{equation}
\overline{n}_d(t,\varepsilon,\delta)
\le C_K \frac{t^2}{\varepsilon} \left( 1 + \frac{\log(1/\delta)}{d} \right).
\end{equation}
As a result, for fixed $K$ and $\delta$, the correction term $\log(1/\delta)/d$ vanishes asymptotically as $d$ grows, yielding
\begin{equation}
\overline{n}_d(t,\varepsilon,\delta) = O\!\left(\frac{t^2}{\varepsilon}\right),
\end{equation}
which completes the proof.
\end{proof}

\medskip

\begin{proof}[Proof of Proposition~\ref{prop:subgaussian-concentration}]
    We estimate the operator norm of the Frobenius-normalized random operator $L=L'/\|L'\|_2$. Since $\left\|L\right\|_\infty^2 = \frac{\|L'\|_\infty^2}{\|L'\|_2^2}$, it suffices to upper bound $\|L'\|_\infty$ and lower bound $\|L'\|_2^2$.
    For the numerator, the standard operator-norm bound for subgaussian random matrices gives universal constants $C,c>0$ such that, for every $u\ge0$~\cite[Theorem~4.4.3]{vershynin2018HighDimensionalProbabilityIntroduction},
    \begin{equation}
        \mathbb{P}\!\left(\|L'\|_\infty > C K \sqrt d + u\right)
        \le 2\exp\!\left(-c\frac{u^2}{K^2}\right).
    \end{equation}
    For the denominator, since the entries have unit variance, $\mathbb{E}\|L'\|_2^2=d^2$. 
    Moreover, the square of a subgaussian random variable is sub-exponential, and a standard Hanson--Wright concentration bound~\cite{hanson1971BoundTailProbabilities} gives a universal constant $c'>0$ such that, for every $\eta\in(0,1)$,
    \begin{equation}
        \mathbb{P}\!\left(\|L'\|_2^2 < (1-\eta)d^2 \right)
        \le2\exp\!\left(-c'\frac{\eta^2d^2}{K^4}\right).
    \end{equation}
    Equivalently, this follows by applying the Hanson--Wright inequality to $\|L'\|_2^2=\braket{L'}{L'}$~\cite[Theorem~6.2.2]{vershynin2018HighDimensionalProbabilityIntroduction}.
    
    Let us define
    \begin{equation}
        X(u,\eta)=\frac{(CK\sqrt d+u)^2}{(1-\eta)d^2}.
    \end{equation}
    By the preceding two tail bounds and the union bound, we have
    \begin{align}
        \mathbb{P}\!\left(\frac{\|L'\|_\infty^2}{\|L'\|_2^2} > X(u,\eta)\right)
        &\leq \mathbb{P}\!\left(\left\|L'\right\|_\infty > CK\sqrt d+u\right)+
        \mathbb{P}\!\left(\left\|L'\right\|_2^2 < (1-\eta)d^2\right)\nonumber\\
        &\le 2\exp\!\left(-c\frac{u^2}{K^2}\right)+ 2\exp\!\left(-c'\frac{\eta^2d^2}{K^4}\right).
    \end{align}
    For a given $\delta \in (0,1)$, choose
    \begin{equation}
        u = C_1 K \sqrt{\log(4/\delta)}, \qquad
        \eta = C_2 K^2 \frac{\sqrt{\log(4/\delta)}}{d},
    \end{equation}
    where $C_1,C_2>0$ are universal constants chosen sufficiently large so that the right-hand side is at most $\delta$.
    Using $(a+b)^2 \le 2a^2 + 2b^2$ and the bound $\frac{1}{1-\eta} \le 2$ for $\eta \le 1/2$, we obtain
    \begin{align}    
        X(u,\eta) &\le \frac{2(CK\sqrt d+u)^2}{d^2}\\
        &\le \frac{4C^2K^2}{d}+\frac{4u^2}{d^2}\\
        &\le C_K \left( \frac{1}{d} + \frac{\log(1/\delta)}{d^2} \right),
    \end{align}
    where $C_K>0$ depends only on $K$.
    Therefore, with probability at least $1-\delta$,
    \begin{equation}
        \left\|L\right\|_\infty^2=\frac{\|L'\|_\infty^2}{\|L'\|_2^2}
        \le C_K\left(\frac{1}{d}+\frac{\log(1/\delta)}{d^2}\right).
    \end{equation}
    This completes the proof.
\end{proof}

\section{Proof of Proposition~\ref{prop:rankone_lower}}
\label{app:worst_prop_proof}
We prove Proposition~\ref{prop:rankone_lower} by evaluating the WML error for a particular rank-one Lindblad operator and input state.
The argument proceeds in four steps.
We define the jump operator as 
\begin{equation}
L\coloneqq |u\rangle\langle u|,
\end{equation}
where $\ket{u}$ is a normalized pure state, i.e., $\langle u | u\rangle =1$. We also take the initial state $\rho = \ket{u} \bra{u}$. The exact evolution of the state under the Lindbladian leads to
\begin{equation}
\rho_{\mathrm{ex}}(t)\coloneq e^{\mathcal Lt}[|u\rangle\langle u|],
\end{equation}
where the WML algorithm leads to the state
\begin{equation}
\rho_{\mathrm{WML}}^{(n)}(t)\coloneq \left(\widetilde{e^{\mathcal Lt/n}}\right)^n[|u\rangle\langle u|].
\end{equation}

\subsection{Exact Lindbladian action}
Since $L= \ket{u}\bra{u} = L^\dagger L$, we have
\begin{equation}
\mathcal L(\ket{u}\bra{u}) = L \ket{u}\bra{u} L^{\dagger} -\frac{1}{2}\{L^{\dagger}L, \ket{u}\bra{u} \}=0.
\end{equation}
Therefore, under the exact Lindbladian dynamics, the initial state remains unchanged, as
\begin{equation}
\rho_{\mathrm{ex}}(t) = \ket{u}\bra{u},
\label{eq:rho_ex_projector}
\end{equation}
for all $t\geq 0$.

\subsection{Exact $n$-step WML state on the invariant two-dimensional space}
From the analytic one-step formula for WML derived in Appendix~\ref{appendix:main-upper-proof}, we have
\begin{equation}
\widetilde{e^{\mathcal L\Delta}}(\rho) = \rho + a\,\frac{1}{2}\{\ket{u}\bra{u},\rho\}
+ b\,L\rho L^\dagger + c\,\mathcal D_1[L\rho L^\dagger],
\label{eq:one_step_wml_exact}
\end{equation}
where
\begin{equation}
\mathcal D_1(X)\coloneqq \frac{\mathbb I}{d}\operatorname{Tr}(X),
\end{equation}
and
\begin{align}
&a = (e^{T\Delta})_{20} = \frac{e^{-d\Delta/2}-1}{d/2}, \\
&(e^{T\Delta})_{10} =
-\frac{e^{-(d-1/d)\Delta}-1}{d-1/d}, \\
&b= \frac{2}{d^2-2}(e^{T\Delta})_{20}
+ \frac{d^2}{d^2-2}(e^{T\Delta})_{10}, \\
&c= -\frac{d^2}{d^2-2}\Bigl((e^{T\Delta})_{20}+(e^{T\Delta})_{10}\Bigr).
\end{align}
Applying \eqref{eq:one_step_wml_exact} to $\ket{u}\bra{u}$ and $\mathbb I$, we obtain
\begin{align}
\widetilde{e^{\mathcal L\Delta}}(\ket{u}\bra{u})
&=(1+a+b)\ket{u}\bra{u}+\frac{c}{d}\mathbb I, \\
\widetilde{e^{\mathcal L\Delta}}(\mathbb I)
&=(a+b)\ket{u}\bra{u}+\left(1+\frac cd\right)\mathbb I.
\end{align}
Therefore, the space $\operatorname{span}\{\ket{u}\bra{u},\mathbb I\}$ is invariant under one-step WML.

Now, for each $n\ge0$, we write 
\begin{equation}
\rho_{\mathrm{WML}}^{(n)}(t)=y_n\ket{u}\bra{u}+z_n\mathbb I
\end{equation}
for real coefficients $y_n,z_n$. The coefficient vector satisfies
\begin{equation}
\begin{pmatrix}
y_{n+1}\\ z_{n+1}
\end{pmatrix}
= M_\Delta
\begin{pmatrix}
y_n\\ z_n
\end{pmatrix},
\qquad
M_\Delta\coloneqq 
\begin{pmatrix}
1+s & s\\[1mm]
\kappa & 1+\kappa
\end{pmatrix},
\label{eq:block_matrix_update}
\end{equation}
where
\begin{equation}
s\coloneqq a+b, \qquad
\kappa\coloneqq \frac{c}{d}.
\end{equation}
The initial vector is
\begin{equation}
\begin{pmatrix}
y_0\\ z_0
\end{pmatrix}
=
\begin{pmatrix}
1\\ 0
\end{pmatrix}.
\end{equation}

Now define
\begin{equation}
\lambda\coloneqq 1+s+\kappa=1+a+b+\frac cd.
\end{equation}
A direct computation shows that $M_\Delta$ has eigenvalues $1$ and $\lambda$, with corresponding eigenvectors
\begin{equation}
v_1=
\begin{pmatrix}
1\\ -1
\end{pmatrix},
\qquad
v_2=
\begin{pmatrix}
s\\ \kappa
\end{pmatrix}.
\end{equation}
Since
\begin{equation}
\begin{pmatrix}
1\\ 0
\end{pmatrix}
= \frac{\kappa}{\lambda-1}
\begin{pmatrix}
1\\ -1
\end{pmatrix}
+ \frac{1}{\lambda-1}
\begin{pmatrix}
s\\ \kappa
\end{pmatrix},
\end{equation}
it follows that
\begin{equation}
M_\Delta^n
\begin{pmatrix}
1\\ 0
\end{pmatrix}
= \frac{\kappa}{\lambda-1}
\begin{pmatrix}
1\\ -1
\end{pmatrix}
+ \frac{\lambda^n}{\lambda-1}
\begin{pmatrix}
s\\ \kappa
\end{pmatrix}.
\end{equation}
Therefore
\begin{equation}
\begin{pmatrix}
y_n\\ z_n
\end{pmatrix}
= \frac{1}{\lambda-1}
\begin{pmatrix}
\kappa+s\lambda^n\\[1mm]
\kappa(\lambda^n-1)
\end{pmatrix}.
\label{eq:yn_zn_exact}
\end{equation}
In particular,
\begin{equation}
z_n = \frac{\kappa(\lambda^n-1)}{\lambda-1}
= \frac{c}{d}\,\frac{\lambda^n-1}{\lambda-1}.
\label{eq:z_rec_final}
\end{equation}

\subsection{Trace-distance lower bound}
Let
\begin{equation}
\Omega_n(t)\coloneqq \rho_{\mathrm{WML}}^{(n)}(t)-\rho_{\mathrm{ex}}(t).
\end{equation}
By \eqref{eq:rho_ex_projector}, the exact state $\rho_{\mathrm{ex}}(t) = \ket{u}\bra{u}$ is stationary.
On the other hand,
\begin{equation}
\rho_{\mathrm{WML}}^{(n)}(t)=y_n\ket{u}\bra{u}+z_n\mathbb I
\end{equation}
has restriction $z_nI_{S^\perp}$ on the orthogonal complement $S^\perp$. Hence
\begin{equation}
\Omega_n(t)\big|_{S^\perp}=z_nI_{S^\perp}.
\end{equation}
Since $\dim(S^\perp)=d-1$, we obtain
\begin{equation}
\frac{1}{2}\|\Omega_n(t)\|_1 \ge \frac{d-1}{2}\,|z_n|.
\label{eq:error_zn}
\end{equation}

We now show that $z_n>0$ for all sufficiently large $n$.
From the Taylor expansions
\begin{equation}
a=-\Delta+\frac d4\Delta^2+O(\Delta^3),
\qquad
b=\Delta-\frac d2\Delta^2+O(\Delta^3),
\qquad
c=\frac d4\Delta^2+O(\Delta^3),
\end{equation}
we obtain
\begin{equation}
\lambda = 1+a+b+\frac cd
= 1-\frac{d-1}{4}\Delta^2+O(\Delta^3)<1
\end{equation}
for a sufficiently large $n$. Hence, $\lambda^n-1<0$ and $\lambda-1<0$, so
\begin{equation}
\frac{\lambda^n-1}{\lambda-1}>0.
\end{equation}
Moreover, the explicit formula for $c$ implies $c>0$ for $\Delta>0$ and $d\ge2$.
Therefore, \eqref{eq:z_rec_final} implies that
\begin{equation}
z_n>0.
\end{equation}
So \eqref{eq:error_zn} becomes
\begin{equation}
\frac{1}{2}\|\Omega_n(t)\|_1
\ge
\frac{d-1}{2}\,z_n.
\label{eq:error_zn_positive}
\end{equation}

\subsection{Asymptotic expansion of $z_n$}
From the coefficient formulas, we obtain
\begin{equation}
\lambda = 1-\frac{d-1}{4}\Delta^2+O(\Delta^3),
\qquad
\kappa=\frac{c}{d}=\frac14\Delta^2+O(\Delta^3).
\end{equation}
Then, by writing $\lambda=1-\frac{d-1}{4}\Delta^2+O(\Delta^3)=1+x$ with $x=O(\Delta^2)$ and using the expansion $\log(1+x)=x-\frac{x^2}{2}+O(x^3)$ gives
\begin{equation}
\log\lambda=-\frac{d-1}{4}\Delta^2+O(\Delta^3).
\end{equation}
Multiplying by $n$ and using $\Delta=t/n$, we obtain
\begin{equation}
n\log\lambda = -\frac{d-1}{4}\frac{t^2}{n}+O_{d,t}(n^{-2}),
\end{equation}
which leads to
\begin{equation}
\lambda^n = \exp{-\frac{d-1}{4}\frac{t^2}{n} + O_{d,t}(n^{-2})}=1-\frac{d-1}{4}\frac{t^2}{n} + O_{d,t}(n^{-2}).
\end{equation}
We also note that 
\begin{equation}
\frac{\lambda^n-1}{\lambda-1} = n + O_{d,t}(1).
\end{equation}
Substituting these into \eqref{eq:z_rec_final}, we obtain
\begin{equation}
z_n = \frac{t^2}{4n} + O_{d,t}(n^{-2}).
\label{eq:zn_asymptotic}
\end{equation}
Combining \eqref{eq:error_zn_positive} and \eqref{eq:zn_asymptotic}, we find
\begin{equation}
\frac{1}{2}\|\rho_{\mathrm{WML}}^{(n)}(t)-\rho_{\mathrm{ex}}(t)\|_1
\ge \frac{d-1}{8}\,\frac{t^2}{n} + O_{d,t}(n^{-2}).
\end{equation}
Therefore, there exists $N_{d,t}\in\mathbb N$ such that for all $n\ge N_{d,t}$,
\begin{equation}
\frac{1}{2}\|\rho_{\mathrm{WML}}^{(n)}(t)-\rho_{\mathrm{ex}}(t)\|_1
\ge \frac{d-1}{16}\,\frac{t^2}{n}.
\label{eq:error_lower_rzero}
\end{equation}

Now, we define the following sample complexities
\begin{equation}
n_\varepsilon = n_{\mathrm{WML}}(t,\varepsilon;L,\rho)
\coloneqq  
\inf\left\{ n\in\mathbb N:\frac12\left\|\left({\widetilde{e^{\mathcal{L}[L]t}}} - e^{\mathcal{L}[L]t}\right)[\rho]\right\|_1 \le \varepsilon \right\}.
\end{equation}
and
\begin{equation}
n_{\mathrm{WML}}(t,\varepsilon)
\coloneqq  
\inf\left\{ n\in\mathbb N:\sup_{\mathcal{L}\in\mathcal{B}(\mathcal{H}_d)}\frac12\left\|\left({\widetilde{e^{\mathcal{L}t}}} - e^{\mathcal{L}t}\right)\right\|_\diamond \le \varepsilon \right\}.
\end{equation}
From these definitions, we note that 
\begin{equation}
\frac{1}{2}\|\rho_{\mathrm{WML}}^{(n_\varepsilon)}(t)-\rho_{\mathrm{ex}}(t)\|_1
\le \varepsilon,
\end{equation}
and
\begin{equation}
n_{\mathrm{WML}}(t,\varepsilon;L, \rho)\leq n_{\mathrm{WML}}(t,\varepsilon)
\end{equation}
For sufficiently small $\varepsilon>0$, we have $n_\varepsilon\ge N_{d,t}$, and hence \eqref{eq:error_lower_rzero} implies
\begin{equation}
\frac12\|\rho_{\mathrm{WML}}^{(n_\varepsilon)}(t)-\rho_{\mathrm{ex}}(t)\|_1
\ge \frac{d-1}{16}\,\frac{t^2}{n_\varepsilon}.
\end{equation}
Combining the two inequalities, we obtain
\begin{equation}
\varepsilon \ge \frac{d-1}{16}\,\frac{t^2}{n_\varepsilon}.
\end{equation}
Consequently, the sample complexity is lower bounded as
\begin{equation}
n_{\mathrm{WML}}(t,\varepsilon)\geq n_\varepsilon \ge \frac{d-1}{16}\,\frac{t^2}{\varepsilon}
\end{equation}
for sufficiently small $\varepsilon>0$.
Finally, for $d\ge2$, we have $d-1\ge d/2$. This leads to
\begin{equation}
n_{\mathrm{WML}}(t,\varepsilon)
\ge
\frac{d}{32}\,\frac{t^2}{\varepsilon}
\qquad \text{for sufficiently small }\varepsilon>0,
\end{equation}
which completes the proof.

\section{Alternative realization of WML via Stinespring isometry}

Physical realization of WML Lindbladian evolution $\mathcal{M}$ may not be straightforward in quantum computers, where operations are inherently unitary. Recently, it was shown how to simulate $e^{\mathcal{M}\Delta}$ \cite[Appendices B and C]{sims2025DigitalQuantumSimulations}. In this section, we propose another method that utilizes the structure of $M$.

One example that bypasses the problem is to consider the Stinespring isometry of the dynamics. Since the WML Lindbladian only contains a single jump operator $M$, a single ancillary qubit is enough to realize the Stinespring isometry. Formally, the isometry is defined as $V_{\delta, M}\coloneqq U_{\delta, M} (\mathbb{I}_{0123}\otimes\ket{0}_a)$ such that $e^{\Delta\mathcal{M}}=\operatorname{Tr}_{a}V_{\delta, M}(~\cdot~)V_{\delta, M}^\dagger$. Here, $U_{\delta, M}$ is a strong evolution of a block-encoded Hamiltonian, sometimes referred to as \emph{J-matrix}~\cite{cleve2019EfficientQuantumAlgorithms, childs2023EfficientSimulationSparse, sims2025DigitalQuantumSimulations}:
\begin{equation}
    U_{\delta, M} = \exp\!\left[{-i\sqrt{\delta}\begin{bmatrix}
        0 & M^\dagger \\ M & 0
    \end{bmatrix}}\right].
\end{equation} 

Again, due to the repetitive structure of higher orders of $M$, we can exactly express the unitary with a few terms:
\begin{equation}
    U_{\delta, M} = \begin{bmatrix}
        \mathbb{I}-\left(\frac{1-\cos\sqrt{\delta d}}{d}\right)M^\dagger M & ~ * ~\\ -i\frac{\sin\sqrt{\delta d}}{\sqrt{d}}M & ~ * ~
    \end{bmatrix}.
\end{equation}
Therefore, letting $\Delta=2\tfrac{1-\cos\sqrt{\delta d}}{d}$, the Stinespring isometry for a single step evolves the input state according to the WML Lindbladian operator up to first order:
\begin{align}
    \operatorname{Tr}_{a}V_{\delta, M}\rho V_{\delta, M}^\dagger & = \rho+\Delta\mathcal{M}(\rho)+\frac{\Delta^2d}{4}\left(\frac{M^\dagger M\rho M^\dagger M}{d}-M\rho M^\dagger\right),\\
    \operatorname{Tr}_{23a}V_{\delta, M}\rho \otimes \pi_L V_{\delta, M}^\dagger & = \rho+\Delta\mathcal{L}(\rho)+\frac{\Delta^2d}{4}(\mathcal{D}_1-\mathcal{I})(L\rho L^\dagger).
\end{align}
We refer to Appendix~\ref{appendix:main-upper-proof} for the detailed derivation. Again, using the cyclic property of trace, we conclude for the Stinespring expansion, 
\begin{equation}
    \widetilde{e^{\mathcal{L}\Delta}}=\mathcal{I} + \Delta(\mathcal{I}-\mathcal{D}_1)\widetilde{\mathcal{L}}
\end{equation}
for  $\widetilde{\mathcal{L}}[\rho]=\mathcal{L}[\rho]-\frac{\Delta d}{4}L\rho L^\dagger$, which satisfies
\begin{equation}
    \left\Vert\widetilde{\mathcal{L}}-\mathcal{L}\right\Vert_\diamond\leq\frac{\Delta d}{4}\left\Vert L\right\Vert_\infty^2.
\end{equation}
This again leads to Lemma~\ref{lem:main-upper-lemma}, such that the sample complexity upper bound remains $O(d\left\|L\right\|_\infty^2t^2/\varepsilon)$. The scaling behavior remains the same even when using a more advanced method for simulating $e^{\Delta\mathcal{M}}$, e.g., polynomial-depth approximation via direct decomposition $M$, or LCU~\cite[Appendices B and C]{sims2025DigitalQuantumSimulations}.

\end{document}